\long\def\ca#1\certified bounds{} 
\newcommand{\ketbra}[2]{| \hspace{1pt} #1 \rangle \langle #2 \hspace{1pt} |}
\newcommand{\ket}[1]{|#1\rangle}               
\newcommand{\bra}[1]{\langle #1|}              
\newcommand{\dya}[1]{\ket{#1}\!\bra{#1}}
\newcommand{\matl}[3]{\langle #1|#2|#3\rangle} 
\newcommand{\DQC}{\ensuremath{\mathsf{DQC1}}\xspace}
\newcommand{\poly}{\operatorname{poly}}
\newcommand{\rank}{\text{rank}}
\newcommand{\DN}{D}
\newcommand{\EC}{\mathcal{E}}
\newcommand{\FN}{F}
\newcommand{\UC}{\mathcal{U}}
\newcommand{\ZC}{\mathcal{Z}}
\newcommand{\HS}{\text{HS}}
\newcommand{\Tr}{{\rm Tr}}
\renewcommand{\geq}{\geqslant}
\renewcommand{\leq}{\leqslant}
\renewcommand{\vec}[1]{\boldsymbol{#1}}  
\newcommand{\ad}{^\dagger}
\newcommand*{\id}{\openone}
\newcommand{\rhot}{\tilde{\rho}}
\newtheorem{lemma}{Lemma}
\newtheorem{proposition}{Proposition}
\newtheorem{propositionSM}{Proposition}
\theoremstyle{definition}
\begin{document}
\title{Variational Quantum Fidelity Estimation}

\author{M. Cerezo}
\affiliation{Theoretical Division, Los Alamos National Laboratory, Los Alamos, NM, USA}
\affiliation{Center for Nonlinear Studies, Los Alamos National Laboratory, Los Alamos, NM, USA
}
\orcid{0000-0002-2757-3170}
\thanks{\texttt{marcovsebastian@gmail.com}}
\author{Alexander Poremba}
\affiliation{Computing and Mathematical Sciences, California Institute of Technology, Pasadena, CA, USA.}
\orcid{0000-0002-7330-1539}
\author{Lukasz Cincio}
\affiliation{Theoretical Division, Los Alamos National Laboratory, Los Alamos, NM, USA}
\orcid{0000-0002-6758-4376}
\author{Patrick J. Coles}
\affiliation{Theoretical Division, Los Alamos National Laboratory, Los Alamos, NM, USA}

\begin{abstract}
Computing quantum state fidelity will be important to verify and characterize states prepared on a quantum computer. In this work, we propose novel lower and upper bounds for the fidelity $F(\rho,\sigma)$ based on the ``truncated fidelity'' $F(\rho_m, \sigma)$, which is evaluated for a state $\rho_m$ obtained by projecting $\rho$ onto its $m$-largest eigenvalues. Our bounds can be refined, i.e., they tighten monotonically with $m$. To compute our bounds, we introduce a hybrid quantum-classical algorithm, called Variational Quantum Fidelity Estimation, that involves three steps: (1) variationally diagonalize $\rho$, (2) compute matrix elements of $\sigma$ in the eigenbasis of $\rho$, and (3) combine these matrix elements to compute our bounds. Our algorithm is aimed at the case where $\sigma$ is arbitrary and $\rho$ is low rank, which we call low-rank fidelity estimation, and we prove that no classical algorithm can efficiently solve this problem under reasonable assumptions. Finally, we demonstrate that our bounds can detect quantum phase transitions and are often tighter than  previously known computable bounds for realistic situations.
\end{abstract}

\maketitle

\section{Introduction}

In the near future, quantum computers will become quantum state preparation factories. They will prepare ground and excited states of molecules~\cite{peruzzo2014VQE}, states that simulate quantum dynamics~\cite{brown2010using}, and states that encode the solutions to linear systems~\cite{harrow2009quantum}. These states will necessarily be impure, either intentionally (e.g., when studying thermal states) or due to incoherent noise of the quantum computer (e.g., $T_1$ and $T_2$ processes). Verification and characterization of these mixed states will be important, and hence efficient algorithms will be needed for this purpose. A widely used measure for verification and characterization is the fidelity~\cite{uhlmann1976transition, jozsa1994fidelity, nielsen2010, wilde2017,Liang_2019}. For example, one may be interested in the fidelity with a fixed target state (i.e., for verification) or the fidelity between subsystems of many-body states to study behavior near a phase transition (i.e., for characterization)~\cite{hauru2018uhlmann}. For two  states $\rho$ and $\sigma$, the fidelity is defined as \cite{uhlmann1976transition, jozsa1994fidelity, nielsen2010, wilde2017}
\begin{equation}
\FN(\rho,\sigma) = 	\Tr\sqrt{\sqrt{\rho}\sigma\sqrt{\rho}}=\| \sqrt{\rho}\sqrt{\sigma} \|_1\,,
\label{eqnFidelity}
\end{equation}
where $\|A \|_1=\Tr \sqrt{A\ad A}$. See~\cite{nielsen2010, wilde2017,Liang_2019} for properties of $\FN(\rho,\sigma)$.

Classically computing $\FN(\rho,\sigma)$, or any other metric on quantum states, could scale exponentially due to the exponentially large dimension of the density matrix. This raises the question of whether a quantum computer could avoid this exponential scaling. However, $\FN(\rho,\sigma)$ involves non-integer powers of $\rho$ and $\sigma$, implying that there is no exact quantum algorithm for computing it directly from the probability of a measurement outcome on a finite number of copies of $\rho$ and $\sigma$. In addition,  deciding whether the trace distance (which is closely related to fidelity~\cite{nielsen2010, wilde2017}) is large or small is QSZK-complete~\cite{watrous2002quantum}. Here, QSZK (quantum statistical zero-knowledge) is a complexity class that contains BQP~\cite{nielsen2010} (bounded-error quantum polynomial time). It is therefore reasonable to suspect that estimating $\FN(\rho,\sigma)$ is hard even for quantum computers.

This does not preclude the efficient estimation of fidelity for the practical case when one of the states is low rank.  Low-rank states appear in condensed matter physics~\cite{PhysRevLett.101.010504} (marginals of grounds states) and data science~\cite{bishop2006pattern} (covariance matrices). We define Low-Rank Fidelity Estimation as the problem of estimating $F(\rho, \sigma)$ when $\sigma$ is arbitrary and $\rho$ is approximately low rank. We prove under standard assumptions that a classical algorithm cannot efficiently perform Low-Rank Fidelity Estimation.

In this work, we propose a variational hybrid quantum-classical algorithm \cite{peruzzo2014VQE,farhi2014quantum,mcclean2016theory,Romero17,li2017efficient, QAQC,larose2018,arrasmith2019variational,sharma2019noise,cirstoiu2019variational} for Low-Rank Fidelity Estimation called Variational Quantum Fidelity Estimation (VQFE). VQFE computes upper and lower bounds on $F(\rho, \sigma)$ that can be refined to arbitrary tightness. Our bounds are based on the \textit{truncated fidelity}, which involves evaluating~\eqref{eqnFidelity} for $\sigma$ and $\rho_m$, a truncated version of $\rho$ obtained by projecting $\rho$ onto the subspace associated with its $m$-largest eigenvalues. Crucially, our bounds tighten monotonically in $m$, and eventually they equal the fidelity when $m=\rank(\rho)$.

This is in contrast to the state-of-the-art quantum algorithm to bound the fidelity, which employs fixed bounds called the sub- and super-fidelity bounds (SSFB) \cite{MiszczakPHUZ09,chen2002alternative,mendoncca2008alternative,Punchala2009bound}, defined as
\begin{align}
E(\rho,\sigma)&= \Tr\rho\sigma + \sqrt{2\left[(\Tr\rho\sigma)^2-\Tr\rho\sigma\rho\sigma\right]}\,,\\
G(\rho,\sigma)&= \Tr\rho\sigma+\sqrt{(1-\Tr\rho^2)(1-\Tr\sigma^2)}\,,
\end{align}
respectively, such that  $\sqrt{E(\rho,\sigma)}\leq \FN(\rho,\sigma) \leq \sqrt{G(\rho,\sigma)}$. Since the SSFB are expressed as traces of products of density matrices, they can be efficiently measured on a quantum computer \cite{ brun2004measuring,Bartkiewicz2013measuring, cincio2018learning,garcia2013swap}. These bounds are generally looser when both $\rho$ and $\sigma$ have high rank, and hence the SSFB likewise perform better when one of the states is low rank. Below we give a detailed comparison, showing that VQFE often out-performs the SSFB. We also note that VQFE only requires $2n+1$ qubits while the SSFB require $4n+1$ qubits, for $n$-qubit states.

To produce certified bounds from the output of VQFE, we prove several novel bounds that should be of independent interest in quantum information theory. In addition, we introduce the \textit{fidelity spectrum}, which is the collection of all truncated fidelities. In the same sense that the entanglement spectrum provides more information than a single entanglement measure \cite{PhysRevLett.101.010504}, we argue that the fidelity spectrum gives more information about the closeness of $\rho$ and $\sigma$ than just $F(\rho,\sigma)$.

In what follows, we first present our bounds and the VQFE algorithm. We compare its performance with the SSFB and illustrate its application to phase transitions. All proofs of our results are delegated to the Appendix.

\section{Truncated fidelity bounds}

Let $\Pi_{m}^\rho$ be the projector onto the subspace spanned by the eigenvectors of $\rho$ with the $m$-largest eigenvalues. Consider the sub-normalized states
\begin{equation}
\rho_m^{\rho}= \Pi_m^\rho\rho \Pi_m^\rho =\sum_{i=1}^m r_i \ketbra{r_i}{r_i}\,, \quad \sigma_m^\rho= \Pi_m^\rho\sigma \Pi_m^\rho\,,
\label{EqnRhoTruncated}
\end{equation}
where the eigenvalues $\{r_i\}$ of $\rho$ are in decreasing order ($r_i\geq r_{i+1}$). For simplicity we denote $\rho_m^{\rho}=\rho_m$. These sub-normalized states can be used to define the {\it truncated fidelity}
\begin{equation}
    \FN(\rho_m,\sigma_m^\rho)=\FN(\rho_m,\sigma)= \left\| \sqrt{\rho_m}\sqrt{\sigma} \right\|_1\,,
    \label{EqnTrunF}
\end{equation}
and the {\it truncated generalized fidelity} 
\begin{align}
\FN_\ast(\rho_m,\sigma_m^\rho) =&\left\| \sqrt{\rho_m}\sqrt{\sigma} \right\|_1 \notag \\
&+\sqrt{(1-\Tr\rho_m)(1-\Tr \sigma_m^\rho)}\,,
     \label{EqnTrunGF}
\end{align}
where the generalized fidelity~\cite{tomamichel2010duality,tomamichel2015quantum} was defined for two sub-normalized states as 
$\FN_\ast(\rho,\sigma)=\|\sqrt{\rho}\sqrt{\sigma}\|_1+\sqrt{(1-\Tr\,\rho)(1-\Tr\,\sigma)}$. The generalized fidelity  reduces to \eqref{eqnFidelity} if at least one state is normalized. 
Equations~\eqref{EqnTrunF} and \eqref{EqnTrunGF} are in fact lower and upper bounds, respectively, for the fidelity.
\begin{proposition}
\label{prop1}
The following truncated fidelity bounds (TFB) hold:
\begin{equation}
    \FN(\rho_m,\sigma_m^\rho)\leq \FN(\rho,\sigma)\leq\FN_\ast(\rho_m,\sigma_m^\rho)\,.
\label{EqnBounds}
\end{equation}
\end{proposition}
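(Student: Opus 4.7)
The plan is to exploit the fact that $\Pi_m^\rho$ commutes with $\rho$: since $\rho_m = \Pi_m^\rho \rho \Pi_m^\rho$ and $\Pi_m^\rho$ is diagonal in the eigenbasis of $\rho$, its positive square root satisfies $\sqrt{\rho_m} = \Pi_m^\rho \sqrt{\rho} = \sqrt{\rho}\Pi_m^\rho$. The truncated fidelity then rewrites as
\begin{equation*}
\FN(\rho_m,\sigma) = \|\sqrt{\rho_m}\sqrt{\sigma}\|_1 = \|\Pi_m^\rho \sqrt{\rho}\sqrt{\sigma}\|_1,
\end{equation*}
and the lower bound $\FN(\rho_m,\sigma)\le \FN(\rho,\sigma)$ is immediate from the H\"older-type estimate $\|PA\|_1 \le \|P\|_\infty \|A\|_1$ applied with $P = \Pi_m^\rho$ (whose operator norm is $1$) and $A = \sqrt{\rho}\sqrt{\sigma}$.

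For the upper bound, I would resolve the identity as $I = \Pi_m^\rho + (I-\Pi_m^\rho)$, split
\begin{equation*}
\sqrt{\rho}\sqrt{\sigma} = \Pi_m^\rho \sqrt{\rho}\sqrt{\sigma} + (I-\Pi_m^\rho)\sqrt{\rho}\sqrt{\sigma},
\end{equation*}
and apply the trace-norm triangle inequality to obtain
\begin{equation*}
\FN(\rho,\sigma) \le \FN(\rho_m,\sigma) + \|(I-\Pi_m^\rho)\sqrt{\rho}\sqrt{\sigma}\|_1.
\end{equation*}
Next, observing that $(I-\Pi_m^\rho)\sqrt{\rho} = \sqrt{\rho - \rho_m}$ (the product of two commuting positive operators whose square is $\rho-\rho_m$), the residual equals $\|\sqrt{\rho-\rho_m}\sqrt{\sigma}\|_1$. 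The key trick is then to insert the idempotent $(I-\Pi_m^\rho)$ for free, since $\sqrt{\rho-\rho_m}$ is supported on its range, and to invoke the Cauchy--Schwarz inequality $\|AB\|_1 \le \|A\|_2\|B\|_2$:
\begin{equation*}
\|\sqrt{\rho-\rho_m}\sqrt{\sigma}\|_1 \le \|\sqrt{\rho-\rho_m}\|_2 \, \|(I-\Pi_m^\rho)\sqrt{\sigma}\|_2.
\end{equation*}
Each Hilbert--Schmidt norm evaluates cleanly: $\|\sqrt{\rho-\rho_m}\|_2^2 = \Tr(\rho-\rho_m) = 1 - \Tr\rho_m$ and $\|(I-\Pi_m^\rho)\sqrt{\sigma}\|_2^2 = \Tr((I-\Pi_m^\rho)\sigma) = 1 - \Tr\sigma_m^\rho$, and adding the result to $\FN(\rho_m,\sigma)$ reproduces exactly the definition of $\FN_\ast(\rho_m,\sigma_m^\rho)$ in \eqref{EqnTrunGF}.

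The only piece of book-keeping I expect to watch is the equivalence $\|\sqrt{\rho_m}\sqrt{\sigma}\|_1 = \|\sqrt{\rho_m}\sqrt{\sigma_m^\rho}\|_1$, which confirms that the first term of \eqref{EqnTrunGF} really is the fidelity of the sub-normalized states $\rho_m$ and $\sigma_m^\rho$; this follows from $\sqrt{\rho_m}\sigma\sqrt{\rho_m} = \sqrt{\rho_m}\sigma_m^\rho\sqrt{\rho_m}$, because $\Pi_m^\rho$ is absorbed by $\sqrt{\rho_m}$ on either side. Beyond that, the whole argument uses only the trace-norm triangle inequality, H\"older $\|PX\|_1\le\|X\|_1$, and Cauchy--Schwarz $\|AB\|_1\le\|A\|_2\|B\|_2$; no operator-monotonicity or Uhlmann purification machinery is required. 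The subtle point that makes the bound tight rather than trivial is precisely the gratuitous insertion of $(I-\Pi_m^\rho)$ before $\sqrt{\sigma}$, which upgrades Cauchy--Schwarz from $\sqrt{1-\Tr\rho_m}$ alone to the product $\sqrt{(1-\Tr\rho_m)(1-\Tr\sigma_m^\rho)}$.
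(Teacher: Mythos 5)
Your proof is correct, and it takes a genuinely different route from the paper's. The paper derives the upper bound from the monotonicity of the generalized fidelity under completely positive trace non-increasing maps (applied to the pinching $X\mapsto \Pi_m^\rho X\Pi_m^\rho$), and the lower bound from the strong concavity of the fidelity applied to the decomposition $\rho=\Pi_m^\rho\rho\Pi_m^\rho+\overline{\Pi}_m^\rho\rho\overline{\Pi}_m^\rho$; both are imported as black-box results from the literature. You instead work entirely at the level of Schatten-norm inequalities: the lower bound is the one-line H\"older estimate $\norm{\Pi_m^\rho\sqrt{\rho}\sqrt{\sigma}}_1\leq\norm{\sqrt{\rho}\sqrt{\sigma}}_1$, and the upper bound follows from the trace-norm triangle inequality plus Cauchy--Schwarz, with the crucial (and correct) insertion of the projector $(I-\Pi_m^\rho)$ in front of $\sqrt{\sigma}$ to obtain the product $\sqrt{(1-\Tr\rho_m)(1-\Tr\sigma_m^\rho)}$ rather than the weaker $\sqrt{1-\Tr\rho_m}$. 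All the algebraic identities you rely on ($\sqrt{\rho_m}=\Pi_m^\rho\sqrt{\rho}$, $(I-\Pi_m^\rho)\sqrt{\rho}=\sqrt{\rho-\rho_m}$, and $\FN(\rho_m,\sigma)=\FN(\rho_m,\sigma_m^\rho)$) hold because $\Pi_m^\rho$ is a spectral projector of $\rho$. What each approach buys: the paper's argument slots directly into the generalized-fidelity framework and immediately yields the monotonicity statements of Proposition~\ref{prop2} by the same two tools, whereas yours is elementary and self-contained, requires no operator-monotonicity or purification machinery, and as a bonus makes the gap bound of Eq.~\eqref{previousprop3} (indeed a slight sharpening of it, with the extra factor $\sqrt{1-\Tr\sigma_m^\rho}$) fall out of the derivation rather than being a separate observation.
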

We refer to the collection of TFB for different values of $m$ as the {\it fidelity spectrum}. The TFB are satisfied with equality when $m =\rank(\rho)$. Moreover, they monotonically get tighter as $m$ increases. Hence they can be refined to arbitrary tightness by increasing $m$.

\begin{proposition}
\label{prop2}
The truncated fidelity $\FN(\rho_m,\sigma_m^\rho)$ is monotonically increasing in $m$, and the truncated generalized fidelity $\FN_{*}(\rho_m,\sigma_m^\rho)$ is monotonically decreasing in $m$.
\end{proposition}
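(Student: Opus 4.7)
The plan is to handle the two monotonicity statements separately. The increasing monotonicity of $\FN(\rho_m,\sigma_m^\rho)$ follows from a direct Loewner-order argument, while the decreasing monotonicity of $\FN_\ast(\rho_m,\sigma_m^\rho)$ follows by reinterpreting it as an honest fidelity of normalized states on a slightly enlarged Hilbert space and invoking the data-processing inequality.

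For the first claim, I would rewrite the truncated fidelity from Eq.~\eqref{EqnTrunF} as
$\FN(\rho_m,\sigma_m^\rho)=\|\sqrt{\rho_m}\sqrt{\sigma}\|_1=\Tr\sqrt{\sqrt{\sigma}\,\rho_m\sqrt{\sigma}}$.
Since $\rho_{m+1}-\rho_m=r_{m+1}\ketbra{r_{m+1}}{r_{m+1}}\geq 0$, we have the Loewner ordering $\rho_m\leq\rho_{m+1}$. Conjugation by the PSD operator $\sqrt{\sigma}$ preserves this ordering, giving $\sqrt{\sigma}\rho_m\sqrt{\sigma}\leq\sqrt{\sigma}\rho_{m+1}\sqrt{\sigma}$; operator monotonicity of the square root then yields $\sqrt{\sqrt{\sigma}\rho_m\sqrt{\sigma}}\leq\sqrt{\sqrt{\sigma}\rho_{m+1}\sqrt{\sigma}}$, and taking the trace establishes the desired monotone increase.

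For the second claim I would first identify $\FN_\ast$ as a fidelity between normalized states. Adjoin one ancillary vector $\ket{\bot}$ orthogonal to the supports of $\rho$ and $\sigma$, and define $\tilde\rho_m := \rho_m + (1-\Tr\rho_m)\ketbra{\bot}{\bot}$ and $\tilde\sigma_m^\rho := \sigma_m^\rho + (1-\Tr\sigma_m^\rho)\ketbra{\bot}{\bot}$, both of trace one. Because $\ket{\bot}$ lies outside the supports of the sub-normalized blocks, $\sqrt{\tilde\rho_m}\sqrt{\tilde\sigma_m^\rho}$ splits into the original-space block $\sqrt{\rho_m}\sqrt{\sigma_m^\rho}$ plus the rank-one ancillary block of weight $\sqrt{(1-\Tr\rho_m)(1-\Tr\sigma_m^\rho)}\,\ketbra{\bot}{\bot}$, and its trace norm reproduces exactly Eq.~\eqref{EqnTrunGF}. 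Hence $\FN_\ast(\rho_m,\sigma_m^\rho)=\FN(\tilde\rho_m,\tilde\sigma_m^\rho)$, and it suffices to exhibit a single CPTP map $\mathcal{E}$ with $\mathcal{E}(\tilde\rho_{m+1})=\tilde\rho_m$ and $\mathcal{E}(\tilde\sigma_{m+1}^\rho)=\tilde\sigma_m^\rho$; data processing for $\FN$ then closes the argument.

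The main obstacle is engineering a single channel that truncates both extensions simultaneously. A natural candidate is the Kraus set $K_0=\Pi_m^\rho$, $K_1=\ketbra{\bot}{r_{m+1}}$, $K_2=\ketbra{\bot}{\bot}$, plus $K_3=I-\Pi_{m+1}^\rho$ (acting on the original space and zero on $\ket{\bot}$) to ensure $\sum_i K_i^\dagger K_i$ equals the identity on the enlarged space. On $\tilde\rho_{m+1}$, $K_0$ passes the first $m$ eigencomponents through, $K_1$ dumps the $(m+1)$-th eigencomponent of weight $r_{m+1}$ into $\ket{\bot}$, and $K_2$ preserves the existing ancillary mass, assembling total ancillary weight $r_{m+1}+(1-\Tr\rho_{m+1})=1-\Tr\rho_m$, i.e.\ $\tilde\rho_m$. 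On $\tilde\sigma_{m+1}^\rho$ the identical Kraus action produces $K_0\sigma_{m+1}^\rho K_0^\dagger=\Pi_m^\rho\sigma_{m+1}^\rho\Pi_m^\rho=\sigma_m^\rho$; off-diagonal cross terms $\bramatket{r_i}{\sigma}{r_{m+1}}$ with $i\leq m$ are killed because $K_0$ and $K_1$ have mutually orthogonal ranges; and the ancillary weight assembles, via $\Tr\sigma_{m+1}^\rho-\Tr\sigma_m^\rho=\bramatket{r_{m+1}}{\sigma}{r_{m+1}}$, to $1-\Tr\sigma_m^\rho$, i.e.\ $\tilde\sigma_m^\rho$. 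The nontrivial point of the whole proof is this simultaneous compatibility: once both identities hold, DPI for the standard fidelity immediately delivers the desired monotonic decrease.
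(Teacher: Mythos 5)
Your argument is correct, but it follows a genuinely different route from the paper's on both halves. For the increasing part, the paper expands $\rho_m=\Pi_{m-1}^\rho\rho\Pi_{m-1}^\rho+\ket{r_m}\rho_{mm}\bra{r_m}$ and invokes strong concavity of the fidelity, dropping the nonnegative second term; your Loewner-order chain ($\rho_m\leq\rho_{m+1}$, conjugation by $\sqrt{\sigma}$, operator monotonicity of $t\mapsto\sqrt{t}$, trace) is more elementary and arguably cleaner, though it does rest on the nontrivial operator monotonicity of the square root, and it implicitly uses $\|\sqrt{\rho_m}\sqrt{\sigma}\|_1=\Tr\sqrt{\sqrt{\sigma}\rho_m\sqrt{\sigma}}$, which is fine. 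For the decreasing part, the paper simply notes $\rho_m=\EC_m(\rho_{m+1})$ and $\sigma_m^\rho=\EC_m(\sigma_{m+1}^\rho)$ with $\EC_m(\cdot)=\Pi_m^\rho(\cdot)\Pi_m^\rho$ a CPTNI map and cites monotonicity of the generalized fidelity under such maps; you instead re-derive that monotonicity in this special case from scratch, via the standard embedding of sub-normalized states into normalized ones on an enlarged space and an explicit CPTP Kraus construction, then apply ordinary data processing. Your Kraus set checks out ($\sum_i K_i^\dagger K_i$ is the identity on the enlarged space, and both $\tilde\rho_{m+1}\mapsto\tilde\rho_m$ and $\tilde\sigma_{m+1}^\rho\mapsto\tilde\sigma_m^\rho$ hold, using $\Tr\sigma_{m+1}^\rho-\Tr\sigma_m^\rho=\bramatket{r_{m+1}}{\sigma}{r_{m+1}}$). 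Two minor presentational points: the identification $\FN_\ast(\rho_m,\sigma_m^\rho)=\FN(\tilde\rho_m,\tilde\sigma_m^\rho)$ tacitly uses $\|\sqrt{\rho_m}\sqrt{\sigma}\|_1=\|\sqrt{\rho_m}\sqrt{\sigma_m^\rho}\|_1$ (true, since $\sqrt{\rho_m}\Pi_m^\rho=\sqrt{\rho_m}$, and already implicit in Eq.~\eqref{EqnTrunF}), and the remark about "cross terms being killed by orthogonal ranges" is a slight misdescription --- a Kraus channel never produces cross terms between different $K_i$; what matters is that $\Pi_m^\rho\ket{r_{m+1}}=0$ inside the single term $K_0\tilde\sigma_{m+1}^\rho K_0^\dagger$. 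What your approach buys is self-containedness (no appeal to the CPTNI-monotonicity of the generalized fidelity as a black box); what the paper's buys is brevity.
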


Ultimately, we will consider the case when $\rho$ is either low rank or {\it $\epsilon$-low-rank}. Here we define the $\epsilon$-rank as a generalization of the rank to within some $\epsilon$ error:
\begin{equation}
\rank_{\epsilon}(\rho) = \min \{  m \in \{1,..., d\} : \|\rho - \rho_m\|_1 \leq \epsilon  \}\,,  
\label{epsilonrank}
\end{equation}
where $d$ is the Hilbert space dimension. Note that $\rank_{0}(\rho) = \rank(\rho)$. We remark that the looseness of the TFB is bounded
by the square root of $\|\rho - \rho_m\|_1 = 1 - \Tr(\rho_m)$:
\begin{align}\label{previousprop3}
\FN_\ast(\rho_m,\sigma_m^\rho) - \FN(\rho_m,\sigma_m^\rho)\leq \|\rho - \rho_m\|_1^{1/2}\,, 
\end{align}
where we have used \eqref{EqnTrunF}, \eqref{EqnTrunGF}, and $\Tr \sigma_m^\rho\leq 1$.
Hence, the TFB looseness is bounded by $\sqrt{\epsilon}$ provided that $m = \rank_{\epsilon}(\rho)$.

\begin{figure}
    \centering
    \includegraphics[width=\columnwidth]{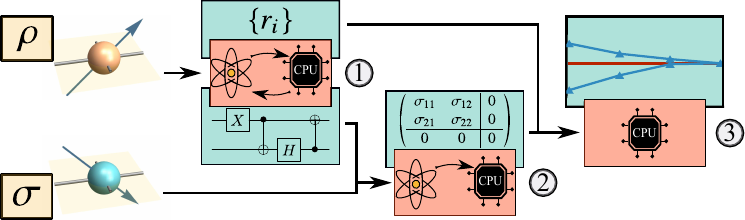}
    \caption{The VQFE algorithm. (1) First, $\rho$ is diagonalized with a hybrid quantum-classical optimization loop, outputting the largest eigenvalues $\{r_i\}$ of $\rho$ and a gate sequence that prepares the associated eigenvectors. (2) Second, a hybrid quantum-classical computation gives the matrix elements of $\sigma$ in the eigenbasis of $\rho$. (3) Finally, classical processing gives upper and lower bounds on $F(\rho, \sigma)$.}
    \label{fig:VQFE}
\end{figure}

\section{The VQFE algorithm}

Figure~\ref{fig:VQFE} shows the overall structure of the VQFE algorithm. VQFE involves three steps: (1) approximately diagonalize $\rho$ with a variational hybrid quantum-classical algorithm, (2) compute matrix elements of $\sigma$ in the eigenbasis of $\rho$, (3) classically process these matrix elements to produce certified bounds on $F(\rho, \sigma)$.

The first subroutine employs Variational Quantum State Diagonalization (VQSD)~\cite{larose2018}, a variational hybrid algorithm that
takes in two copies of $\rho$ and outputs approximations of the $m$-largest eigenvalues $\{r_i\}$ and a gate sequence $U$ that prepares the associated eigenvectors $\{\ket{r_i}\}$. This subroutine involves a  quantum-classical optimization loop that minimizes a cost function $C$ that quantifies how far $\rhot= U\rho U\ad$ is from being diagonal. Namely, $C=D_{\HS}(\rhot, \ZC(\rhot))$, with $D_{\HS}(\rho,\sigma)=\Tr[(\rho-\sigma)^2]$ the Hilbert-Schmidt distance and $\ZC$ the dephasing channel in the computational basis. When $C=0$ we have exact diagonalization, whereas for $C\neq 0$, VQSD outputs the eigenvalues and eigenvectors of 
\begin{equation}
\rho'=U\ad \ZC(\rhot) U=\sum_{i} r_i' \ketbra{r_i'}{r_i'}\,,
    \label{EqnRhoPrime}
\end{equation}
such that $C=D_{\HS}(\rho, \rho')$.

The second subroutine measures the matrix elements $\sigma_{ij}=\matl{r_i}{\sigma}{r_j}$, or more precisely $\sigma_{ij}'=\matl{r_i'}{\sigma}{r_j'}$ if $C\neq 0$. This is done by preparing superpositions of the eigenvectors of the form $(\ket{r_i'} + \ket{r_j'})/\sqrt{2}$ (by means of VQSD's eigenstate preparation circuit \cite{larose2018}) and computing the inner product of this superposition with $\sigma$ via a Swap Test. (For example, see \cite{cincio2018learning,arrasmith2019variational} for the precise circuit.)  For a fixed $m$, one needs to measure $m(m+1)/2$ matrix elements. However, when incrementing $m$ to $m+1$, one only needs to measure $2m+1$ new quantities.

The third subroutine involves only classical computation, combining the outputs of the previous subroutines to produce bounds on $F(\rho, \sigma)$. These bounds employ the TFB described before, for which one needs to compute  
\begin{equation}
\left\| \sqrt{\rho_m}\sqrt{\sigma} \right\|_1=\Tr\sqrt{\sum_{i,j} T_{ij} \ketbra{r_i}{r_j}}\,.
\label{EqnTMatrix}
\end{equation}
Here we have defined the $m\times m$  matrix $T$ with elements  $T_{ij}=\sqrt{r_ir_j}\matl{r_i}{\sigma}{r_j}=\sqrt{r_ir_j}\sigma_{ij}$ such that $T\geq 0$.  Note that the $T_{ij}$ can be computed directly from the outputs of the first two subroutines. One can then classically diagonalize $T$ to compute \eqref{EqnTMatrix}.  (If shot noise leads to a non-positive matrix $T$, one can efficiently find the maximum-likelihood matrix $\tilde{T}\geq 0$ which gives the observed results with highest probability~\cite{PhysRevLettSmolin}, and use $\tilde{T}$ in place of $T$.) Note that since  $\Tr\rho_m=\sum_{i=1}^m r_i$ and $\Tr \sigma_m^\rho=\sum_{i=1}^m\sigma_{ii}$, the TFB are completely determined by $\{r_i\}$ and $\{\sigma_{ij}\}$. Explicitly, the TFB are computed via $\FN(\rho_m,\sigma_m^\rho)=\sum_i \sqrt{\lambda_i}$ and $\FN_\ast(\rho_m,\sigma_m^\rho)=\sum_i \sqrt{\lambda_i}+\sqrt{(1-\sum_{i} r_i)(1-\sum_{i}\sigma_{ii})}$ with $\lambda_i$ the eigenvalues of $T$ and $i=1,\ldots,m$.

Technically, the aforementioned procedure computes  $\FN(\rho_m',\sigma_m^{\rho'})$ and $\FN_\ast(\rho_m',\sigma_m^{\rho'})$. If $C\approx 0$, then $\rho'\approx\rho$ and these quantities are actually bounds on $\FN(\rho,\sigma)$. However, if $C$ is appreciable then to produce certified bounds we must account for  $\rho'\neq\rho$. Here we present two such certified bounds.

\begin{proposition}
The following certified cost function bounds (CCFB) hold:
\begin{equation}
    \FN(\rho'_m,\sigma_m^{\rho'})-\delta\leq \FN(\rho,\sigma)\leq\FN_\ast(\rho'_m,\sigma_m^{\rho'})+\delta\,,
\label{EqnCCFB}
\end{equation}
with $\delta=\min(\delta_1,\delta_2)$, and where $\delta_1=(4 \rank(\rho) C)^{1/4}$ and  $\delta_2=(2\epsilon_m'+\sqrt{2 m C})^{1/2}$ with $\epsilon_m'=1-\Tr\rho'_m$.
\label{prop4}
\end{proposition}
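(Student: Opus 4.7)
The plan is to reduce the CCFB to a continuity estimate of the form $|\FN(\rho,\sigma) - \FN(\rho',\sigma)| \leq \delta$ and then to establish this in two different ways to obtain $\delta_1$ and $\delta_2$. Applying Proposition~\ref{prop1} to the pair $(\rho',\sigma)$ sandwiches $\FN(\rho',\sigma)$ between $\FN(\rho'_m,\sigma_m^{\rho'})$ and $\FN_\ast(\rho'_m,\sigma_m^{\rho'})$, and combining this sandwich with the continuity estimate via the triangle inequality immediately yields the two CCFB inequalities.

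For the continuity step, I would use $\FN(\rho,\sigma) = \|\sqrt{\rho}\sqrt{\sigma}\|_1$, apply the triangle inequality in Schatten $1$-norm followed by H\"older's inequality $\|AB\|_1 \leq \|A\|_2\|B\|_2$ together with $\|\sqrt{\sigma}\|_2 = 1$ to obtain $|\FN(\rho,\sigma) - \FN(\rho',\sigma)| \leq \|\sqrt{\rho} - \sqrt{\rho'}\|_2$, and then invoke the Powers--Stormer inequality $\|\sqrt{\rho}-\sqrt{\rho'}\|_2^2 \leq \|\rho-\rho'\|_1$. The resulting working estimate $|\FN(\rho,\sigma) - \FN(\rho',\sigma)|^2 \leq \|\rho-\rho'\|_1$ leaves the task of bounding $\|\rho-\rho'\|_1$ (or a natural close analogue) by $C = \|\rho-\rho'\|_2^2$ together with an appropriate rank-type factor.

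For $\delta_1$ I would decompose $\rho-\rho'$ into three blocks with respect to the projector $\Pi$ onto $\supp(\rho)$: an in-block piece $A = \Pi(\rho-\rho')\Pi$ of rank at most $r = \rank(\rho)$, a Hermitian off-diagonal cross piece $B = \Pi(\rho-\rho')(I-\Pi) + \text{h.c.}$, and a negative-semidefinite tail piece $D = -(I-\Pi)\rho'(I-\Pi)$. Writing $\|\rho-\rho'\|_1 = 2\Tr(\rho-\rho')_+$ (the trace of $\rho-\rho'$ vanishes) and using the subadditivity $\Tr(A+B+D)_+ \leq \Tr A_+ + \Tr B_+ + \Tr D_+$, the tail piece drops out (since $D \leq 0$), while $\Tr A_+$ and $\Tr B_+$ are each controlled by $\|X\|_1 \leq \sqrt{\rank(X)}\,\|X\|_2$ together with $\|A\|_2^2 + \|B\|_2^2 \leq C$. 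After careful bookkeeping this yields $\|\rho-\rho'\|_1 \leq 2\sqrt{\rank(\rho)\,C}$, and hence $|\FN(\rho,\sigma)-\FN(\rho',\sigma)|^4 \leq 4\rank(\rho)\,C$, i.e.\ $\delta_1$. For $\delta_2$ I would perform the analogous block decomposition with $\Pi_m^{\rho'}$ (the projector onto the top-$m$ eigenspace of $\rho'$) in place of $\Pi$: the in-block and cross-term pieces each contribute at most a constant multiple of $\sqrt{mC}$, while the positive tail $(I-\Pi_m^{\rho'})\rho(I-\Pi_m^{\rho'})$ has trace at most $\epsilon_m' + \sqrt{mC}$, which follows from $\Tr\rho'_m = 1-\epsilon_m'$ and the Cauchy--Schwarz estimate $|\Tr(\Pi_m^{\rho'}(\rho-\rho'))| \leq \sqrt{m}\sqrt{C}$. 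Collecting the pieces gives $|\FN(\rho,\sigma)-\FN(\rho',\sigma)|^2 \leq 2\epsilon_m' + \sqrt{2mC}$, which is $\delta_2^2$. Taking $\delta = \min(\delta_1,\delta_2)$ finishes the proof.

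The main obstacle is the careful bookkeeping of constants to land precisely on $(4\rank(\rho)C)^{1/4}$ and $(2\epsilon_m' + \sqrt{2mC})^{1/2}$ rather than looser variants; the essential tricks are that $\|X\|_1 = 2\Tr X_+$ for traceless Hermitian $X$, that $\Tr(\cdot)_+$ is subadditive over Hermitian summands so the non-positive tail in $\delta_1$ is free, and that for a Hermitian off-diagonal block of the form $M+M^\dagger$ the positive eigenvalues coincide with the singular values of $M$ (which live on a space of dimension at most $r$ rather than $2r$).
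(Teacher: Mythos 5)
Your overall skeleton is exactly the paper's: establish a continuity estimate $|\FN(\rho,\sigma)-\FN(\rho',\sigma)|\leq\delta$, apply Proposition~\ref{prop1} to the pair $(\rho',\sigma)$, and combine. Your continuity step is also equivalent to the paper's Lemma on $|\FN(A,C)-\FN(B,C)|\leq\|\sqrt{A}-\sqrt{B}\|_2$ followed by the Holevo/Powers--St{\o}rmer bound, landing on the same working estimate $|\FN(\rho,\sigma)-\FN(\rho',\sigma)|^2\leq\|\rho-\rho'\|_1$. The divergence, and the gap, is in how you convert $\|\rho-\rho'\|_1$ into a function of $C$ and a rank parameter. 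The paper does this via the Jordan decomposition $\Delta=\Delta^+-\Delta^-$, Weyl's inequality to get $\rank_\epsilon(\Delta^+)\leq\rank_\epsilon(\rho)$ and $\rank_\epsilon(\Delta^-)\leq\rank_\epsilon(\rho')$, and the purity--rank inequality applied to \emph{both} parts, which produces the harmonic-mean factor $R_\epsilon=\rank_\epsilon(\rho)\rank_\epsilon(\rho')/(\rank_\epsilon(\rho)+\rank_\epsilon(\rho'))$. Your one-sided block decomposition with subadditivity of $\Tr(\cdot)_+$ does not reach the stated constants: optimizing $\sqrt{r}\,(\|A\|_2+\|M\|_2)$ subject to $\|A\|_2^2+2\|M\|_2^2\leq C$ gives $\Tr(\rho-\rho')_+\leq\sqrt{3rC/2}$, hence $\|\rho-\rho'\|_1\leq\sqrt{6rC}$ and $\delta_1=(6\rank(\rho)C)^{1/4}$, not $(4\rank(\rho)C)^{1/4}$. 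For $\delta_1$ this is repairable within your framework: since the quadratic form of $\rho-\rho'$ is $\leq 0$ on the orthogonal complement of $\supp(\rho)$, the matrix has at most $\rank(\rho)$ positive eigenvalues, so Cauchy--Schwarz applied directly to the positive part gives $D_T\leq\sqrt{\rank(\rho)\,C}$ and the correct $(4\rank(\rho)C)^{1/4}$.

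For $\delta_2$ the gap is more serious. Even with the inertia repair (which fails here anyway, because the tail block $(\openone-\Pi_m^{\rho'})(\rho-\rho')(\openone-\Pi_m^{\rho'})$ is not sign-definite), collecting your pieces yields $D_T\leq\epsilon_m'+c\sqrt{mC}$ with $c\geq 2$, hence $|\FN(\rho,\sigma)-\FN(\rho',\sigma)|^2\leq 2\epsilon_m'+(2+\sqrt{6})\sqrt{mC}$ or thereabouts --- roughly a factor of three worse than the claimed $\sqrt{2mC}$ on the second term. Reaching $\sqrt{2mC}$ requires two ingredients you never invoke: (i) the two-sided Jordan-decomposition argument, which gives $(D_T-\epsilon)^2\leq R_\epsilon(\rho,\rho')\,C$ with the harmonic mean $R_\epsilon\leq m/2$ rather than $m$; and (ii) the majorization fact $\rank_{\epsilon}(\rho)\leq\rank_{\epsilon}(\rho')$, which the paper proves from Schur--Horn using the specific structure $\rho'=U\ad\ZC(U\rho U\ad)U$ --- a property of the VQSD output that your argument treats as an arbitrary state. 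Without (ii) one only gets $R_\epsilon\leq m$, and without (i) one loses even more. So the stated $\delta_2$ is not established by the proposed route.
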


The CCFB show the operational meaning of $C$, which not only bounds the VQSD eigenvalue and eigenvector error \cite{larose2018} but also the VQFE error. Note that $\delta_2$ is directly computable from the VQSD experimental data, whereas $\delta_1$ is useful if one has a promise that $\rho$ is low rank. Alternatively one can use the following certified bounds based on the triangle inequality.

\begin{figure}[t]
    \centering
    \includegraphics[width=\columnwidth]{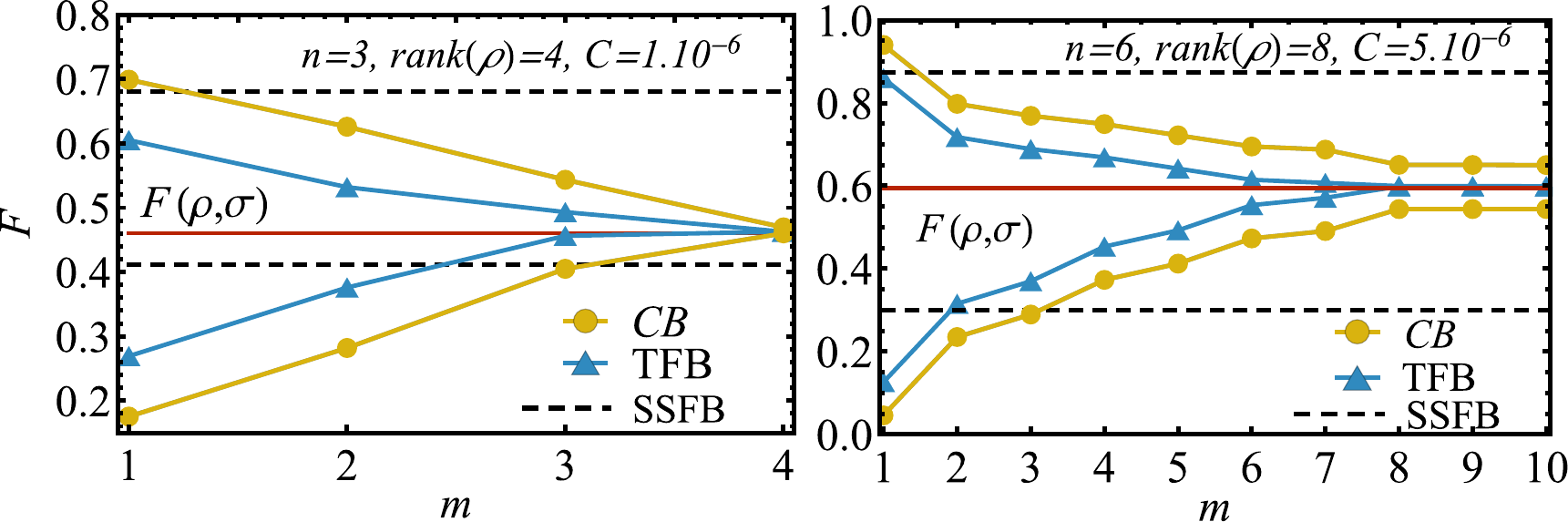}
    \caption{Implementations of VQFE on IBM's quantum computer simulator. The left (right) panel shows bounds on $\FN(\rho,\sigma)$ (solid straight line) versus $m$ for a random state $\sigma$ and a tensor product state $\rho$ of $n=3$ $(n=6)$ qubits with $\rank(\rho)=4$ $(\rank(\rho)=8)$. Dashed lines depict the SSFB. In both cases the TFB converge to $\FN(\rho,\sigma)$. The Certified Bounds (CB) become tighter than the SSFB in both cases, but for  $n=6$ case the certified bounds remain gapped for large $m$ since the cost $C$ is non-negligible.}
    \label{fig:bounds}
\end{figure}

\begin{proposition}
Let $\DN\left(\FN(\rho,\sigma)\right)$ be a distance measure that is monotonically decreasing in $\FN(\rho,\sigma)$. Let $\rho$, $\sigma$, and $\rho'$ be three arbitrary quantum states. Then the following certified triangular inequality bounds (CTIB) hold:
\begin{equation}
    \DN^{\text{LB}}_m\leq \DN\left(\FN(\rho,\sigma)\right)\leq \DN^{\text{UB}}_m\,,
    \label{EqnCTIB}
\end{equation}
with
\begin{eqnarray}
\DN^{\text{LB}}_m=\DN\left(\FN_\ast(\rho_m',\sigma_m^{\rho'} )\right) - \DN\left(\FN(\rho_m',\rho_m^{\rho'} )\right)\,, \label{EqnBoundsDistance21}   \\ 
\DN^{\text{UB}}_m=\DN\left(\FN(\rho'_m,\sigma_m^{\rho'})\right) + \DN\left(\FN(\rho'_m,\rho_m^{\rho'})\right) \,,
    \label{EqnBoundsDistance2}
\end{eqnarray}
where $\rho_m'$, $\sigma_m^{\rho'}$, and $\rho_m^{\rho'}$ are projections onto the subspace of  the $m$ largest eigenvectors of $\rho'$ analogous to (\ref{EqnRhoTruncated}).
\label{prop5}
\end{proposition}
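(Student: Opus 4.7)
The plan is to compose the triangle inequality satisfied by $\DN$ (viewed as a metric on states through $\DN(\rho,\sigma)\equiv \DN(\FN(\rho,\sigma))$) with the truncated fidelity bounds of Proposition~\ref{prop1} applied around the auxiliary state $\rho'$. The forward triangle inequality gives $\DN(\FN(\rho,\sigma)) \leq \DN(\FN(\rho,\rho')) + \DN(\FN(\rho',\sigma))$, while the reverse triangle inequality gives $\DN(\FN(\rho,\sigma)) \geq \DN(\FN(\rho',\sigma)) - \DN(\FN(\rho,\rho'))$. Thus it suffices to bound $\DN(\FN(\rho',\sigma))$ and $\DN(\FN(\rho,\rho'))$ by quantities defined purely in the eigenbasis of $\rho'$, which is exactly what VQSD outputs.

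First I would apply Proposition~\ref{prop1} twice, taking $\rho'$ as the state whose eigenbasis defines the truncation. For the pair $(\rho',\sigma)$ this yields $\FN(\rho'_m,\sigma_m^{\rho'}) \leq \FN(\rho',\sigma) \leq \FN_\ast(\rho'_m,\sigma_m^{\rho'})$, and for $(\rho',\rho)$ it yields $\FN(\rho'_m,\rho_m^{\rho'}) \leq \FN(\rho',\rho) \leq \FN_\ast(\rho'_m,\rho_m^{\rho'})$. Because $\DN$ is monotonically decreasing in $\FN$, each lower (upper) bound on $\FN$ translates to an upper (lower) bound on the corresponding $\DN$.

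For the upper bound in~\eqref{EqnCTIB} I would use the forward triangle inequality together with the substitutions $\DN(\FN(\rho',\sigma)) \leq \DN(\FN(\rho'_m,\sigma_m^{\rho'}))$ and $\DN(\FN(\rho,\rho')) \leq \DN(\FN(\rho'_m,\rho_m^{\rho'}))$; adding these reproduces $\DN^{\text{UB}}_m$ verbatim. For the lower bound I would use the reverse triangle inequality, substituting the lower bound $\DN(\FN(\rho',\sigma)) \geq \DN(\FN_\ast(\rho'_m,\sigma_m^{\rho'}))$ (coming from the \emph{upper} TFB on $\FN(\rho',\sigma)$) together with the same upper bound $\DN(\FN(\rho,\rho')) \leq \DN(\FN(\rho'_m,\rho_m^{\rho'}))$; this reproduces $\DN^{\text{LB}}_m$.

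The only non-trivial ingredient is that $\DN$ must satisfy a triangle inequality, which is implicit in calling it a ``distance measure'' and holds for the standard fidelity-based metrics one would use in practice, such as the Bures angle $\arccos \FN$ or the Bures distance $\sqrt{2-2\FN}$. Beyond that, the proof is a short composition of four monotonicity substitutions; the main bookkeeping task is tracking which direction of the TFB (lower vs.\ upper) must feed which side of the triangle inequality in order to yield a valid one-sided bound on $\DN(\FN(\rho,\sigma))$.
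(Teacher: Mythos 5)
Your proposal is correct and follows essentially the same route as the paper: apply the triangle inequality for $\DN$ around the auxiliary state $\rho'$, then substitute the truncated fidelity bounds of Proposition~\ref{prop1} for the pairs $(\rho',\sigma)$ and $(\rho',\rho)$, using the monotonicity of $\DN$ to flip each bound in the right direction. Your bookkeeping of which side of the TFB feeds which term (lower TFB for both terms of $\DN^{\text{UB}}_m$; upper TFB for the $\sigma$-term and lower TFB for the $\rho$-term of $\DN^{\text{LB}}_m$) matches the paper's argument exactly.
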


One obtains bounds on $\FN(\rho,\sigma)$ from Proposition \ref{prop5} by inverting $\DN$. Moreover, from Proposition \ref{prop2}, one can show that these bounds are refinable: $\DN^{\text{LB}}_m$ monotonically increases in $m$, and $\DN^{\text{UB}}_m$ monotonically decreases in $m$. The CTIB are valid for the Bures angle $\DN_A\left(\FN(\rho,\sigma)\right)=\arccos\FN(\rho,\sigma)$, Bures  distance $\DN_B\left(\FN(\rho,\sigma)\right)=\sqrt{2-2\FN(\rho,\sigma)}$, and the Sine distance $\DN_S\left(\FN(\rho,\sigma)\right)=\sqrt{1-\FN(\rho,\sigma)^2}$ \cite{Gilchrist2005distance,rastegin2006sine}, and hence one can take the metric that gives the tightest bounds. Furthermore, combining Propositions~\ref{prop4} and \ref{prop5}, we use the term Certified Bounds (CB in Fig.~\ref{fig:bounds}) to refer to the minimum (maximum) of our certified upper (lower) bounds. In other words, the Certified Bounds are obtained by taking the tighter of the two bounds provided by Propositions~\ref{prop4} and \ref{prop5}.
 

\section{Implementations}

Figure~\ref{fig:bounds} shows our VQFE implementations on IBM's quantum computer simulator. The left and right panels show representative results for $n=3$ and $n=6$ qubits, respectively. We used the Constrained Optimization By Linear Approximation (COBYLA) algorithm \cite{PowellCOBYLA} in the VQSD optimization loop   and achieved a cost of $C\!\sim\!10^{-6}$ (see Ref.~\cite{kubler2019adaptive} for a comparison of state-of-the-art optimizers for variational algorithms). We chose $\sigma$ as a random state and $\rho = \bigotimes_{j=1}^n \rho_j$ as a tensor product state, where the latter can be diagonalized with a depth-one quantum circuit ansatz. As one can see, as $m$ increases the TFB rapidly converge to $\FN(\rho,\sigma)$, and since $C$ is small ($\!\sim\!10^{-6}$), the TFB can be viewed as bounds on $\FN(\rho,\sigma)$. Nevertheless we also show our Certified Bounds, and in both cases the Certified Bounds are significantly tighter than the SSFB. 

\begin{figure}[t]
    \centering
    \includegraphics[width=\columnwidth]{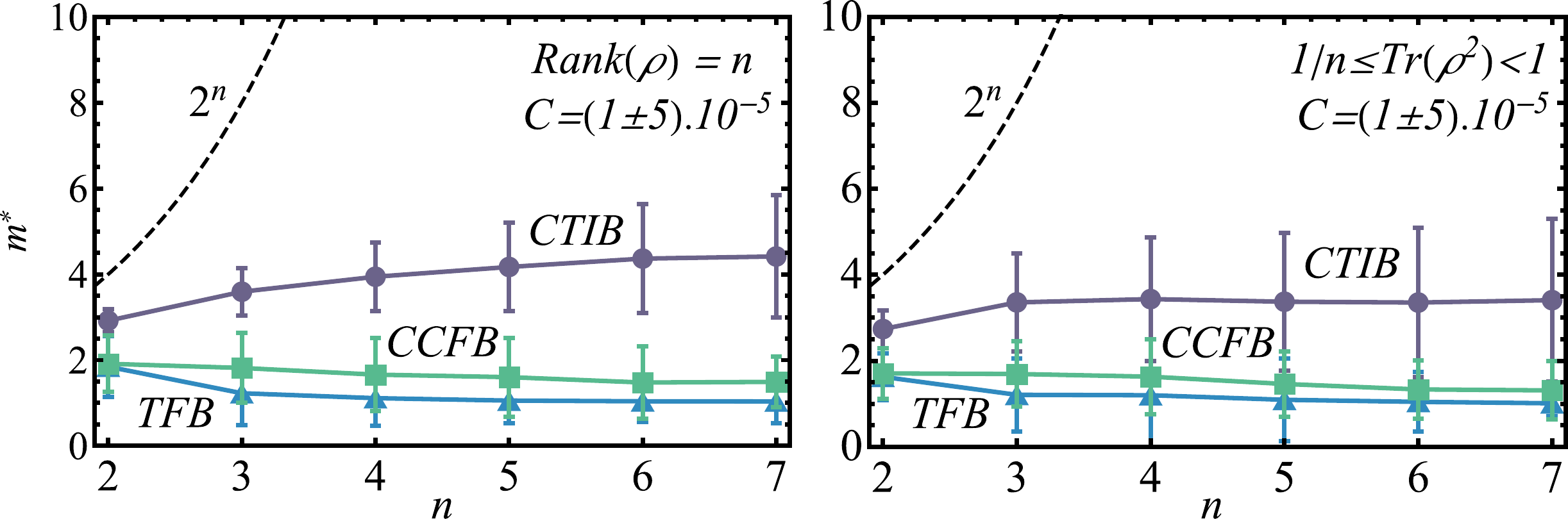}
    \caption{Minimum value of $m$ needed for our bounds (TFB, CCFB, and CTIB) to become tighter than the SSFB ($m^*$) versus $n$. Results were averaged over $2000$ random states $\sigma$ (with uniformly distributed rank and purity), and random states $\rho$ with: low rank ($\rank(\rho)=n$, left), or high purity ($1/n\leq\Tr(\rho^2)<1$, right). Error bars depict standard deviation. A non-zero cost $C$ was obtained by applying a random unitary close to the identity to the diagonal form of $\rho$. As $n$ increases, $m^*$ remains $O(n)$ while the dimension grows exponentially (dashed curve).}
    \label{fig:scaling}
\end{figure}

\section{Heuristic scaling}

Let $m^*$ denote the minimum value of $m$ needed for our bounds to become tighter than the SSFB. Figure~\ref{fig:scaling} plots $m^*$ for the TFB, CCFB, and CTIB for systems with $n=2,\ldots,7$ qubits. The results were obtained by averaging $m^*$ over $2000$ random states $\rho$ and $\sigma$. We considered two cases of interest: when $\rho$ is a low-rank state, and when it has an exponentially decaying spectrum leading to full rank but high purity. In both scenarios $m^*$ is $O(n)$ with $m^* \approx 2$ for the TFB and CCFB, implying that our bounds can outperform the SSFB by only considering a number of eigenvalues in the truncated states which do not scale exponentially with $n$.

\section{Example: Quantum phase transition}

It has been shown that the fidelity can capture the geometric distance between thermal states of condensed matter systems nearing phase transitions and can provide information about the zero-temperature phase diagram  
\cite{Zanardi2007Mixed,Zanardi2007Information,Quan2009Thermal}. We now demonstrate the application of  VQFE  to the study of phase transitions. Consider a cyclic Ising chain of $N=8$ spins-$1/2$ in a uniform field. The Hamiltonian reads
$H=-\sum_{j} h S_j^z+ JS^x_j S^x_{j+1}$, 
with $S_j^{x,z}$ the spin components at site $j$, $h$ the magnetic field, and $J$ the exchange coupling strength. While this model is exactly solvable, it remains of interest as its thermal states can be exactly prepared on a quantum computer \cite{verstraete2009quantum,CerveraLierta2018exactisingmodel}.

Figure \ref{fig:ising} shows the fidelity spectrum for this example. We first verify that the SSFB and the TFB present a pronounced dip near $h=1$, implying that they  can detect the presence of the zero-temperature transition. Moreover, VQFE allows a better determination of the critical field: The TFB give a range for the transition which monotonically tightens as $m$ increases and outperforms the SSFB already for $m=3$ (see inset). The TFB also provide information regarding the closeness of eigenvectors (e.g., upper-TFB which are $\approx 1$ for $m=1,2$), and can detect level crossings where the structure of the subspace spanned by $\{\ket{r_i}\}$ drastically changes. For instance, near $h=1$ the $m=3,4$ TFB present a discontinuity from the crossing between a uniform eigenstate and a pair of exactly degenerate non-uniform symmetry-breaking states. Hence, the fidelity spectrum provides information about the structure of the states beyond the scope of the SSFB or even $\FN(\rho,\sigma)$.

\begin{figure}[t]
    \centering
    \includegraphics[width=\columnwidth]{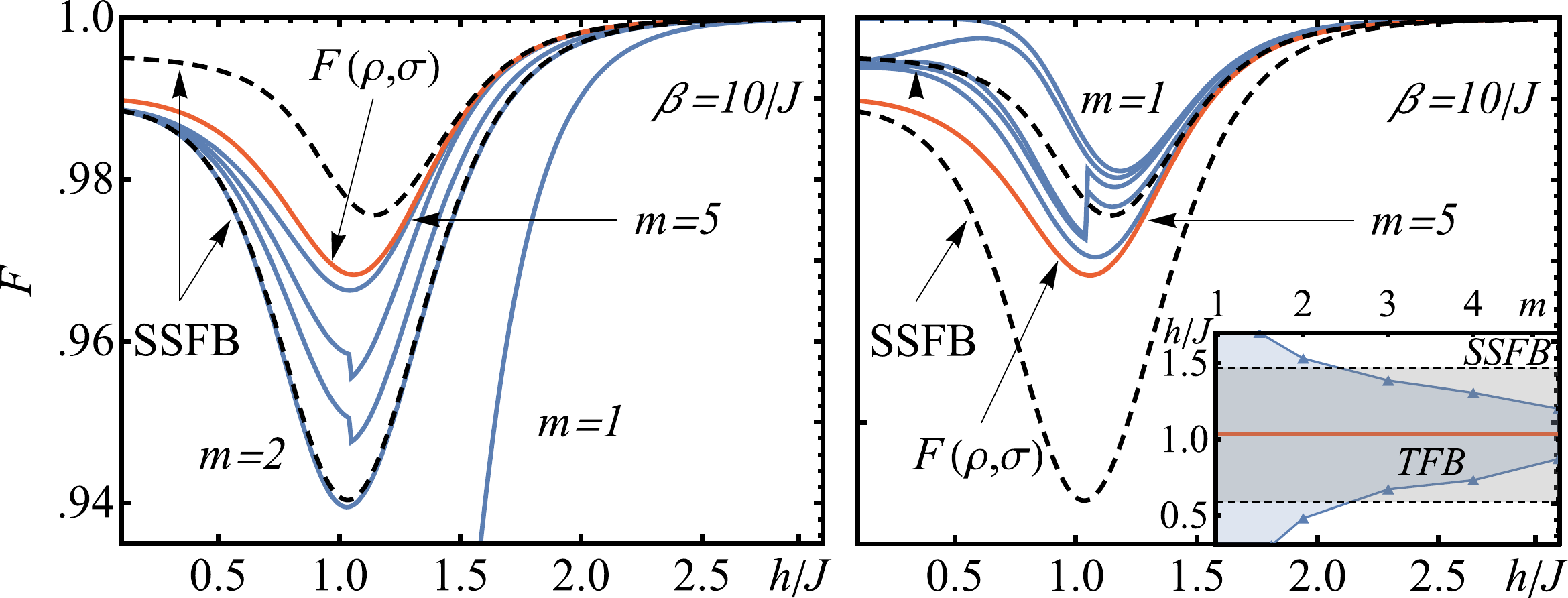}
    \caption{Fidelity spectrum ($m=1,\ldots,5$) between   thermal states $\rho(h)$ and $\rho(h+\delta h)$ of a cyclic Ising chain with $N=8$ spins-$1/2$ in a magnetic field $h$ at inverse temperature $\beta=10/J$, with $J$ the coupling strength and $\delta h=0.01/J$. Dashed lines indicate the SSFB. The left (right) panel depicts the lower (upper) TFB of (\ref{EqnBounds}). The fidelities present a dip at $h\approx 1$ indicating the zero temperature transition of the model. Inset: Bounds for the position of the minimum of the fidelity  (and hence the transition) derived from the SSFB and TFB.}
    \label{fig:ising}
\end{figure}

\section{Complexity analysis}

Recent dequantization results \cite{tang2018quantum, gilyen2018quantum} suggest that exponential speedup over classical algorithms can disappear under low-rank assumptions. Here we show this is not the case for Low-Rank Fidelity Estimation. Low-Rank Fidelity Estimation remains classically hard despite the restriction that one or both of the states are low rank. We formally define Low-Rank Fidelity Estimation as follows.\ \\
\ \\
\textbf{Problem 1.} (\textsc{Low-rank Fidelity Estimation}): \ \\
\textit{Input:} Two $\poly(n)$-sized quantum circuit descriptions $U$ and $V$ that prepare $n$-qubit states $\rho$ and $\sigma$ on a subset of qubits all initially in the $\ket{0}$ state, and a parameter $\gamma$.\ \\
\textit{Promise:} The state $\rho$ is low rank (i.e., $\rank(\rho)= \poly(n)$) or $\epsilon$-low rank (i.e., $\rank_{\epsilon}(\rho) = \poly(n)$ for $\epsilon = 1/\poly(n)$).\ \\
\textit{Output:} Approximation of $F(\rho,\sigma)$ up to additive accuracy $\gamma$.\ \\
\ \\
We remark that VQFE does not require knowledge of the circuit descriptions of $\rho$ and $\sigma$. Rather, allowing for such knowledge is useful when considering classical algorithms.

Next, we analyze the hardness of Low-Rank Fidelity Estimation. Recall that the complexity class \DQC consists of all problems that can be efficiently solved with bounded error in the one-clean-qubit model of computation~\cite{knill1998POOQ}.  The classical simulation of \DQC is impossible unless the polynomial hierarchy collapses to the second level~\cite{DBLP:journals/corr/FujiiKMNTT14, 1704.03640}, which is not believed to be the case. Hence, given the following proposition, one can infer that a classical algorithm {\it cannot efficiently perform Low-Rank Fidelity Estimation}.

\begin{proposition}
\label{prop6}
The problem \textsc{Low-rank Fidelity Estimation} to within precision $\pm \gamma = 1/\poly(n)$ is \DQC-hard.
\end{proposition}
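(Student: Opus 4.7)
The plan is to reduce a canonical \DQC-complete problem to \textsc{Low-Rank Fidelity Estimation}. Specifically, estimating the normalized real trace $\Re\Tr(W)/2^{k}$ of a $\poly(k)$-size unitary circuit $W$ on $k$ qubits to additive precision $1/\poly(k)$ is \DQC-complete: it is what the one-clean-qubit Hadamard-test procedure of Knill and Laflamme computes, and its classical hardness is exactly the content of the results \cite{DBLP:journals/corr/FujiiKMNTT14,1704.03640} already invoked in the main text. It therefore suffices to embed this quantity into the fidelity of two efficiently preparable low-rank states.

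To do this I would first repackage $W$ into an auxiliary unitary with non-negative trace. Let $\tilde W = |0\rangle\langle 0|\otimes W + |1\rangle\langle 1|\otimes W^\dagger$, a unitary on $k+1$ qubits with $\Tr(\tilde W) = 2\Re\Tr(W)$. Then set $V = |0\rangle\langle 0|\otimes I + |1\rangle\langle 1|\otimes \tilde W$ on $N := k+2$ qubits, so $\Tr(V) = 2^{k+1} + 2\Re\Tr(W)$. Since $|\Re\Tr(W)|\le |\Tr(W)|\le 2^k$, we have $\Tr(V)\geq 0$. A $\poly(k)$-size circuit for $V$ is obtained from the one for $W$ by reversal (for $W^\dagger$) and two extra layers of controls.

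Next I would invoke the standard Choi-style gadget. Let $|\Phi\rangle = 2^{-N/2}\sum_x |x\rangle_A|x\rangle_B$ be the maximally entangled state on $2N$ qubits and define the two pure (hence rank-$1$, hence low-rank) states
\[
  \rho = |\Phi\rangle\langle\Phi|, \qquad \sigma = (I_A\otimes V_B)\,\rho\,(I_A\otimes V_B^\dagger).
\]
Both are prepared from $|0\rangle$ by $\poly(N)$-size circuits ($N$ Bell pairs for $\rho$, followed for $\sigma$ by the controlled construction of $V$). Using the identity $\langle\Phi|(I\otimes V)|\Phi\rangle = \Tr(V)/2^N$ together with $\FN(|\psi\rangle\langle\psi|,|\phi\rangle\langle\phi|) = |\langle\psi|\phi\rangle|$,
\[
  \FN(\rho,\sigma) \;=\; \frac{\Tr(V)}{2^N} \;=\; \frac{1}{2} + \frac{\Re\Tr(W)}{2^{k+1}}.
\]
Hence any additive-$\gamma$ estimate of $\FN(\rho,\sigma)$ with $\gamma=1/\poly(N)$ yields an additive-$2\gamma = 1/\poly(k)$ estimate of $\Re\Tr(W)/2^{k}$, completing the reduction.

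The main obstacle I anticipate is arranging the construction so that (i) the absolute value in $|\Tr(V)|/2^N$ can safely be dropped, so that \emph{additive} precision on the fidelity transfers to \emph{additive} precision on the trace (a multiplicative transfer would not suffice), and (ii) the prepared states are genuinely low rank rather than only $\epsilon$-low rank with some unfavorable $\epsilon$. The block-diagonal-plus-controlled form of $V$ resolves (i) by forcing $\Tr(V)\geq 0$ without needing any prior knowledge about the sign of $\Re\Tr(W)$, and the Choi/maximally-entangled construction resolves (ii) because it yields exactly pure states. A final bookkeeping check is that the resulting circuit descriptions conform to the input format required by \textsc{Low-Rank Fidelity Estimation}, which is immediate since $\rho$ and $\sigma$ are prepared on fresh $|0\rangle$ registers by circuits whose size is $\poly(k)$.
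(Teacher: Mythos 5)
Your reduction is correct, and it rests on the same central gadget as the paper's proof: encode a unitary into a pure Choi state so that the fidelity of two rank-one (hence trivially low-rank) states equals a normalized trace of a circuit. The difference is the source problem and how the absolute value is handled. The paper reduces from approximating the Hilbert--Schmidt inner-product magnitude $\Delta_{\HS}(\tilde U,\tilde V)=|\Tr(\tilde V^\dagger\tilde U)|^2/d^2$, whose \DQC-hardness is imported wholesale from the quantum-assisted quantum compiling result \cite{QAQC}; the fidelity of the two Choi states is then $|\Tr(\tilde V^\dagger\tilde U)|/d=\sqrt{\Delta_{\HS}}$, and since both quantities lie in $[0,1]$ an additive estimate of one gives an additive estimate of the other, so the absolute value costs nothing. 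You instead reduce directly from the canonical \DQC-complete task of estimating $\Re\Tr(W)/2^{k}$, which forces you to neutralize the modulus in $F=|\Tr(V)|/2^{N}$; your block construction $V=\dya{0}\otimes \id+\dya{1}\otimes\bigl(\dya{0}\otimes W+\dya{1}\otimes W^{\dagger}\bigr)$ does this cleanly by guaranteeing $\Tr(V)=2^{k+1}+2\,\Re\Tr(W)\geq 0$, at the cost of two ancilla qubits and controlled versions of the gates of $W$ --- a polynomial overhead that is permissible here because the inputs to \textsc{Low-rank Fidelity Estimation} are arbitrary $\poly(n)$-size circuits rather than one-clean-qubit circuits. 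The paper's route is shorter because the target quantity there is already a magnitude; yours is more self-contained, bottoming out at trace estimation rather than at the HS-test hardness theorem. Both arguments are sound and yield the same conclusion.
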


On the other hand, let us now consider the complexity of the VQFE algorithm. The following proposition demonstrates that VQFE is efficient, so long as the VQSD step is efficient.

\begin{proposition}
\label{prop7}
Let $\rho$ and $\sigma$ be quantum states, and suppose we have access to a subroutine that diagonalizes $\rho$ up to error bounded by $C = D_{\HS}(\rho, \rho')$. Let $m$ and $\zeta$ be parameters. Then, VQFE runs in time $\widetilde O(m^6/\zeta^2)$ and outputs an additive $\pm \gamma$-approximation of $F(\rho,\sigma)$, for \begin{equation}
   \gamma = (\epsilon + \sqrt{mC})^{1/2}+ (2 \epsilon + (2+\sqrt{2})\sqrt{mC})^{1/2}+ \zeta \|T\|\,,
\end{equation}
and where $\epsilon$ is determined by $m = \rank_{\epsilon}(\rho)$.
\end{proposition}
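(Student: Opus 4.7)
My plan is to split the additive error $|F(\rho,\sigma)-F^{\mathrm{est}}|$ into three independent pieces—imperfect diagonalization of $\rho$, residual TFB sandwich gap, and finite-shot statistical error in the Swap-Test estimates of $\sigma_{ij}$—bound each in turn, and then read off the runtime as the number of matrix elements times the Hoeffding sample complexity per element, plus a negligible classical diagonalization of the $m\times m$ matrix $T$.

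For the two deterministic pieces I would first invoke Proposition~\ref{prop4} with the $\delta_2$ branch of the CCFB, giving $|F(\rho,\sigma)-F(\rho'_m,\sigma_m^{\rho'})|\leq(2\epsilon'_m+\sqrt{2mC})^{1/2}$ with $\epsilon'_m=1-\Tr\rho'_m$. The hypothesis $m=\rank_\epsilon(\rho)$ controls $\epsilon\geq 1-\Tr\rho_m$ and not $\epsilon'_m$, so I would convert via $\epsilon'_m\leq\epsilon+|\Tr\rho_m-\Tr\rho'_m|$ and bound the difference by the Hoffman--Wielandt inequality $\sum_i(r_i-r'_i)^2\leq\|\rho-\rho'\|_2^2=C$ followed by Cauchy--Schwarz, giving $|\sum_{i=1}^m(r_i-r'_i)|\leq\sqrt{mC}$ and hence $\epsilon'_m\leq\epsilon+\sqrt{mC}$. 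Substituting yields the $(2\epsilon+(2+\sqrt 2)\sqrt{mC})^{1/2}$ term of $\gamma$. Since VQFE actually returns both endpoints $F(\rho'_m,\sigma_m^{\rho'})$ and $F_\ast(\rho'_m,\sigma_m^{\rho'})$, whose spread is $\sqrt{(1-\Tr\rho'_m)(1-\Tr\sigma_m^{\rho'})}\leq\sqrt{\epsilon'_m}\leq(\epsilon+\sqrt{mC})^{1/2}$ by inequality~\eqref{previousprop3}, collapsing the interval to a point estimate contributes the remaining $(\epsilon+\sqrt{mC})^{1/2}$ term of $\gamma$.

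The sampling contribution requires propagating per-entry noise through the map $T\mapsto\|\sqrt T\|_1$. Each of the $m(m+1)/2$ entries $\sigma_{ij}$ is the expectation of a bounded Swap-Test observable, so Hoeffding gives additive accuracy $\eta$ in $\widetilde O(1/\eta^2)$ shots per entry. The Powers--St\o rmer inequality $\|\sqrt T-\sqrt{\tilde T}\|_2\leq\sqrt{\|T-\tilde T\|_1}$ together with Cauchy--Schwarz on rank-$m$ matrices gives $|\|\sqrt T\|_1-\|\sqrt{\tilde T}\|_1|\leq\sqrt{m\|T-\tilde T\|_1}$, and Schatten-norm equivalence on $m\times m$ matrices converts $\eta$ into a perturbation of the target size $\zeta\|T\|$. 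Summing over the $O(m^2)$ entries produces the $\widetilde O(m^6/\zeta^2)$ quantum runtime; the classical eigendecomposition of $T$ needed to form $\sum_i\sqrt{\lambda_i}$ contributes only $O(m^3)$. The main obstacle is precisely this third step: because $T\mapsto\|\sqrt T\|_1$ is only $1/2$-H\"older at vanishing eigenvalues, a naive eigenvalue-by-eigenvalue application of Weyl together with $|\sqrt a-\sqrt b|\leq\sqrt{|a-b|}$ introduces an extra $\sqrt m$ factor and overshoots the sample complexity; the careful step is to apply Powers--St\o rmer once globally in Hilbert--Schmidt norm before paying the $\sqrt m$ via Cauchy--Schwarz, and to pick the Schatten interpolation between entrywise and trace norms so that the polynomial $m$-dependence lines up with the advertised $m^6$. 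Once this bookkeeping is complete the three error sources sum to $\gamma$ and the runtime follows directly.
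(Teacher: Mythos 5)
Your error decomposition into a diagonalization term, a truncation term, and a sampling term is exactly the paper's (Lemma~\ref{lemma6} plus Lemma~\ref{lemma5} in the appendix), and the two deterministic pieces are handled correctly: the $\delta_2$ bound $|\FN(\rho,\sigma)-\FN(\rho',\sigma)|\leq(2\epsilon_m'+\sqrt{2mC})^{1/2}$, the TFB spread $\sqrt{\epsilon_m'}$, and the conversion $\epsilon_m'\leq\epsilon+\sqrt{mC}$ reproduce the first two terms of $\gamma$ with the right constants. (The paper proves the last step as Lemma~\ref{lemma5} by invoking the VQSD eigenvalue-error bound $\sum_i(r_i-r_i')^2\leq C$; your Hoffman--Wielandt argument is an equally valid justification. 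Note also that the raw $\delta_2$ inequality controls $|\FN(\rho,\sigma)-\FN(\rho',\sigma)|$, not $|\FN(\rho,\sigma)-\FN(\rho_m',\sigma_m^{\rho'})|$ as you wrote, but since you add the $\sqrt{\epsilon_m'}$ spread separately the total is the same as the paper's $B_1+B_2$.)

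The gap is in the statistical term, where your proposed route does not deliver the stated runtime. Your chain $|\Tr\sqrt{T}-\Tr\sqrt{\hat T}|\leq\sqrt{m}\,\|\sqrt{T}-\sqrt{\hat T}\|_2\leq\sqrt{m\,\|T-\hat T\|_1}$ is a correct inequality (granting $\hat T\geq 0$), but it is $1/2$-H\"older in the perturbation: to force $\sqrt{m\,\|T-\hat T\|_1}\leq\zeta\|T\|$ you need $\|T-\hat T\|_1\leq\zeta^2\|T\|^2/m$, and with per-entry additive accuracy $\eta$ one only gets $\|T-\hat T\|_1\leq m^{3/2}\eta$, so $\eta=O(\zeta^2\|T\|^2/m^{5/2})$, i.e.\ $O(m^5/\zeta^4)$ shots per entry and $\widetilde O(m^7/\zeta^4)$ total. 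No choice of Schatten interpolation fixes this: any bound of the form $(\mathrm{poly}(m)\cdot\|T-\hat T\|)^{1/2}$ necessarily costs $1/\zeta^4$ samples, so the ``bookkeeping'' you defer cannot line up with the advertised $\widetilde O(m^6/\zeta^2)$. The paper instead bounds $\Delta_3\leq\sum_j|\sqrt{\lambda_j}-\sqrt{\hat\lambda_j}|\leq m\,\|T-\hat T\|$, i.e.\ a bound \emph{linear} in the perturbation, and combines it with a relative-error Chernoff estimate of each entry to precision $\zeta/m^2$, which is what yields $O(m^4/\zeta^2)$ shots per entry and $\widetilde O(m^6/\zeta^2)$ overall. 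Your instinct that the square root's behaviour near vanishing eigenvalues is the delicate point is sound (indeed the paper's inequality $\max_j|\sqrt{\lambda_j}-\sqrt{\hat\lambda_j}|\leq\|T-\hat T\|$ is itself not justified for eigenvalues close to zero), but to prove the proposition as stated you must establish a linear-in-$\|T-\hat T\|$ perturbation bound for $\Tr\sqrt{T}$ rather than retreat to Powers--St\o rmer, which provably gives a weaker result.
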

\noindent Consider the implications of Proposition \ref{prop7} when $\rho$ is $\epsilon$-low rank, meaning that $\epsilon = 1/\poly(n)$ for $m = \poly(n)$. In this case, VQFE can solve Low-Rank Fidelity Estimation with precision $\gamma = 1/\poly(n)$, assuming that the diagonalization error $C$ is small enough such that $mC = 1/\poly(n)$ and also that $\zeta = 1/\poly(n)$, which suffices since $\|T\| \leq 1$. Under these assumptions, the run time of VQFE will be $\poly(n)$.

\section{Conclusion}

In this work, we introduced novel bounds on the fidelity based on truncating the spectrum of one of the states, and we proposed a hybrid quantum-classical algorithm to compute these bounds. We furthermore showed that our bounds typically outperform the sub- and super-fidelities, and they are also useful for detecting quantum phase transitions. Our algorithm will likely find use in verifying and characterizing quantum states prepared on a quantum computer.

Let us recall that we have proposed employing maximum likelihood methods to reconstruct the $T$ matrix when noise makes it non-positive. However, in the context of state tomography, such procedures can can lead to biases that change the entanglement properties of the estimated state~\cite{schwemmer2015systematic,sentis2018bound}. Hence, we leave for future work the analysis of the effect that such reconstruction methods can have on our bounds.

A strength of VQFE is that it does not require access to purifications of $\rho$ and $\sigma$, although future research could study whether having such access could simplify the estimation of $F(\rho, \sigma)$ (e.g., using Uhlmann's theorem~\cite{nielsen2010}). Another important future direction is to use VQFE to compute distance between quantum operations, e.g., as in~\cite{puchala2011experimentally}. In addition it would be of interest to extend VQFE to the sandwiched Renyi relative entropies $D_{\alpha}(\rho || \sigma)$ \cite{muller2013quantum,wilde2014strong}, defined by: 
\begin{equation}
\label{eqnRenyiFamily}
    2^{(\alpha - 1)D_{\alpha}(\rho || \sigma)} =  \Tr \left(\rho^{\frac{1-\alpha}{2\alpha}} \sigma \rho^{\frac{1-\alpha}{2\alpha}} \right)^{\alpha}\,. 
\end{equation}
Note that $\alpha = 1/2$ corresponds to $F(\rho,\sigma)$. Expanding~\eqref{eqnRenyiFamily} in the eigenbasis of $\rho$, one can see that $D_{\alpha}(\rho || \sigma)$ depends only on $\{r_i\}$ and $\{\sigma_{ij}\}$, i.e., the same quantities used in VQFE. Hence it may be possible to use the same strategy as in VQFE to estimate the entire Renyi relative entropy family.

\section{Acknowledgements}

We thank John Watrous and Mark Wilde for helpful correspondence. MC was supported by the Center for Nonlinear Studies at Los Alamos National Laboratory (LANL). AP was supported by AFOSR YIP award number FA9550-16-1-0495, the Institute for Quantum Information and Matter, an NSF Physics Frontiers Center (NSF Grant PHY-1733907), and the Kortschak Scholars program.  LC was supported by the DOE through the J. Robert Oppenheimer fellowship. PJC acknowledges support from the LANL ASC Beyond Moore's Law project. MC, LC, and PJC also acknowledge support from the LDRD program at LANL. This work was supported in part by the U.S. Department of Energy, Office of Science, Basic Energy Sciences, Materials Sciences and Engineering Division, Condensed Matter Theory Program.

\bibliographystyle{unsrtnat}


\newpage
\clearpage
\setcounter{equation}{0}
\setcounter{figure}{0}
\setcounter{lemma}{0}
\onecolumngrid
\appendix
\begin{center}
\vspace*{\baselineskip}
{\textbf{\large Appendix}}\\
\end{center}

Here we provide detailed proofs of the Propositions presented in the main text. In order to make the results clearer we also reiterate these Propositions. 

\section{Proof of Proposition~\ref{prop1}} \label{proofprop1SM}
\begin{propositionSM}\label{prop1SM}
The following truncated fidelity bounds (TFB) hold:
\begin{equation}
    \FN(\rho_m,\sigma_m^\rho)\leq \FN(\rho,\sigma)\leq\FN_\ast(\rho_m,\sigma_m^\rho)\,.
\label{EqnBoundsSM}
\end{equation}
\end{propositionSM}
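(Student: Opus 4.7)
The plan is to handle the two inequalities separately, using operator-monotonicity of the square root for the lower bound and the data-processing inequality for the upper bound.

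For the lower bound $\FN(\rho_m,\sigma_m^\rho)\leq\FN(\rho,\sigma)$, I first observe that $\rho-\rho_m=\sum_{i>m}r_i\ketbra{r_i}{r_i}\geq 0$, so $\rho_m\leq\rho$ as positive operators. Conjugating both sides by $\sqrt{\sigma}$ preserves this inequality, giving $\sqrt{\sigma}\rho_m\sqrt{\sigma}\leq\sqrt{\sigma}\rho\sqrt{\sigma}$. Since $x\mapsto\sqrt{x}$ is operator monotone on positive semidefinite matrices and the trace is monotone under the PSD order, I get $\Tr\sqrt{\sqrt{\sigma}\rho_m\sqrt{\sigma}}\leq\Tr\sqrt{\sqrt{\sigma}\rho\sqrt{\sigma}}=\FN(\rho,\sigma)$. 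The left-hand side equals $\|\sqrt{\rho_m}\sqrt{\sigma}\|_1=\FN(\rho_m,\sigma_m^\rho)$, as noted in Eq.~\eqref{EqnTrunF} and justified by the fact that $\sqrt{\rho_m}$ is supported on the range of $\Pi_m^\rho$ together with the standard identity $\Tr\sqrt{A^\dagger A}=\Tr\sqrt{AA^\dagger}$ applied to $A=\sqrt{\sigma}\sqrt{\rho_m}$.

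For the upper bound $\FN(\rho,\sigma)\leq\FN_\ast(\rho_m,\sigma_m^\rho)$, I construct a CPTP map that realizes the generalized fidelity as a standard fidelity on an extended space. Adjoin one auxiliary dimension $\ket{e}$ orthogonal to the original Hilbert space and define
\begin{equation}
\Phi(\tau)=\Pi_m^\rho\,\tau\,\Pi_m^\rho+\big[\Tr((I-\Pi_m^\rho)\tau)\big]\,\ketbra{e}{e}.
\end{equation}
This map is CPTP, with Kraus operators $K_0=\Pi_m^\rho$ and $K_i=\ket{e}\bra{\phi_i}(I-\Pi_m^\rho)$ for any orthonormal basis $\{\ket{\phi_i}\}$ of the complement of $\Pi_m^\rho$, satisfying $\sum_i K_i^\dagger K_i=\Pi_m^\rho+(I-\Pi_m^\rho)=I$. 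By construction $\Phi(\rho)=\rho_m\oplus(1-\Tr\rho_m)\ketbra{e}{e}$ and $\Phi(\sigma)=\sigma_m^\rho\oplus(1-\Tr\sigma_m^\rho)\ketbra{e}{e}$, where I used $\Tr((I-\Pi_m^\rho)\sigma)=1-\Tr\sigma_m^\rho$ from the cyclicity of the trace and idempotency of $\Pi_m^\rho$. Applying the data-processing inequality for the fidelity, $\FN(\rho,\sigma)\leq\FN(\Phi(\rho),\Phi(\sigma))$, and evaluating the fidelity of these block-diagonal normalized states gives $\|\sqrt{\rho_m}\sqrt{\sigma_m^\rho}\|_1+\sqrt{(1-\Tr\rho_m)(1-\Tr\sigma_m^\rho)}=\FN_\ast(\rho_m,\sigma_m^\rho)$.

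I do not anticipate a major obstacle: both inequalities reduce to standard tools (operator monotonicity of $\sqrt{\cdot}$ and monotonicity of fidelity under CPTP maps). The only bookkeeping step is verifying that $\Phi$ is genuinely trace-preserving and that the fidelity of the block-diagonal extensions factorizes into the two advertised pieces, which follows from the multiplicativity of fidelity under direct sums of states supported on orthogonal subspaces.
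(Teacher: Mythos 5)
Your proof is correct, but it reaches the two inequalities by a different route than the paper. For the lower bound, the paper writes $\rho=\Pi_m^\rho\rho\Pi_m^\rho+\overline{\Pi}_m^\rho\rho\overline{\Pi}_m^\rho$ and invokes the strong concavity of the fidelity, discarding the complement term; you instead exploit the fact that $\Pi_m^\rho$ is a spectral projector of $\rho$, so that $\rho_m\leq\rho$ in the operator order, and then push this through conjugation by $\sqrt{\sigma}$ and the operator monotonicity of $t\mapsto\sqrt{t}$. Your argument is more elementary in the sense that it needs only L\"owner monotonicity rather than strong concavity, and it makes transparent exactly where the spectral-projector property is used (for a generic projector $\Pi$ one does not have $\Pi\rho\Pi\leq\rho$, so this step would fail); the paper's concavity route generalizes more readily to decompositions of $\rho$ into non-commuting pieces. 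For the upper bound, the paper cites the monotonicity of the generalized fidelity under CPTNI maps as a black box, whereas you unfold that black box: your explicit CPTP extension $\Phi$ with the auxiliary flag $\ket{e}$ is precisely the standard dilation used to prove that monotonicity, so your argument is self-contained where the paper's is a citation. Two cosmetic points: the additivity of $\|\cdot\|_1$ over orthogonally supported direct sums is what makes the block-diagonal fidelity split into the two terms (you call this "multiplicativity", which is a slip of terminology, not of substance), and your identification $\|\sqrt{\rho_m}\sqrt{\sigma_m^\rho}\|_1=\|\sqrt{\rho_m}\sqrt{\sigma}\|_1$ is justified exactly as you say, since $\sqrt{\rho_m}\,\Pi_m^\rho=\sqrt{\rho_m}$.
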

\begin{proof}
The right-hand-side of \eqref{EqnBoundsSM} follows from the fact that the generalized fidelity is monotonous under completely positive trace non-increasing (CPTNI) maps  \cite{tomamichel2010duality,tomamichel2015quantum}. In particular, we can define the CPTNI map
\begin{equation}
\EC_m(\rho)=\Pi_m^\rho\rho \Pi_m^\rho \,,
\end{equation}
with $\Pi_{m}^\rho$ the projector onto the subspace spanned by the eigenvectors with $m$-largest eigenvalues, which leads to 
\begin{eqnarray}
 \FN(\rho,\sigma)= \FN_\ast(\rho,\sigma)&\leq& \FN_\ast(\EC_m(\rho),\EC_m(\sigma))\nonumber\\
                    &=& \FN_\ast(\rho_m,\sigma_m^\rho)\,.
\end{eqnarray}
The lower bound is derived from the strong concavity of the fidelity \cite{nielsen2010}, as follows:
\begin{eqnarray}
 \FN(\rho,\sigma)=\FN(\Pi_m^\rho\rho\Pi_m^\rho+\overline{\Pi}_m^\rho\rho \overline{\Pi}_m^\rho,\sigma)&\geq&   \sqrt{p_m}  \FN\left(\frac{\Pi_m^\rho\rho\Pi_m^\rho}{p_m},\sigma\right) +\sqrt{1-p_m} \FN\left(\frac{\overline{\Pi}_m^\rho\rho \overline{\Pi}_m^\rho}{1-p_m},\sigma\right) \nonumber\\
 &\geq&\sqrt{p_m}  \FN\left(\frac{\rho_m}{p_m},\sigma\right) = \FN(\rho_m,\sigma)= \FN(\rho_m,\sigma_m^\rho)\,,\label{EqnProofTruncFid}
\end{eqnarray}
where $p_m=\Tr\rho_m$ and where we have expressed $\rho=\Pi_m^\rho\rho\Pi_m^\rho+\overline{\Pi}_m^\rho\rho \overline{\Pi}_m^\rho$ with $\overline{\Pi}_m^\rho$ the orthogonal complement of $\Pi_m^\rho$. 
\end{proof}

\section{Proof of Proposition~\ref{prop2}} \label{proofprop2SM}

\begin{propositionSM}
The truncated fidelity $\FN(\rho_m,\sigma_m^\rho)$ is monotonically increasing in $m$, and the truncated generalized fidelity $\FN_{*}(\rho_m,\sigma_m^\rho)$ is monotonically decreasing in $m$.
\end{propositionSM}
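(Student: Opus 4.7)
The plan is to prove the two monotonicity statements separately, both exploiting the key geometric observation that the eigenprojectors are nested, $\Pi_m^\rho \leq \Pi_{m+1}^\rho$, so that $\Pi_m^\rho \Pi_{m+1}^\rho = \Pi_m^\rho$. From this it follows directly that $\Pi_m^\rho \rho_{m+1} \Pi_m^\rho = \rho_m$ and $\Pi_m^\rho \sigma_{m+1}^\rho \Pi_m^\rho = \sigma_m^\rho$. In other words, the pair $(\rho_m, \sigma_m^\rho)$ is obtained from $(\rho_{m+1}, \sigma_{m+1}^\rho)$ by the single CPTNI map $\EC_m(X) = \Pi_m^\rho X \Pi_m^\rho$.

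For the monotone decrease of $\FN_*$, I would invoke monotonicity of the generalized fidelity under CPTNI maps \cite{tomamichel2010duality,tomamichel2015quantum}, which was already used in the proof of Proposition~\ref{prop1}. Applied to $\EC_m$, this yields immediately
\begin{equation}
\FN_*(\rho_{m+1},\sigma_{m+1}^\rho) \leq \FN_*(\EC_m(\rho_{m+1}),\EC_m(\sigma_{m+1}^\rho)) = \FN_*(\rho_m,\sigma_m^\rho).
\end{equation}
This is the cleanest step and finishes the upper-bound half with essentially no calculation.

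For the monotone increase of $\FN(\rho_m,\sigma_m^\rho) = \|\sqrt{\rho_m}\sqrt{\sigma}\|_1 = \Tr \sqrt{\sqrt{\sigma}\rho_m\sqrt{\sigma}}$, the same CPTNI trick is not available because $\|\sqrt{\cdot}\sqrt{\cdot}\|_1$ on sub-normalized states is not monotone under CPTNI in general (indeed, that is precisely why one introduces $\FN_*$). Instead I would rely on operator monotonicity of the square root. The decomposition $\rho_{m+1} - \rho_m = r_{m+1}\ketbra{r_{m+1}}{r_{m+1}} \geq 0$ gives the operator inequality $\sqrt{\sigma}\rho_{m+1}\sqrt{\sigma} \geq \sqrt{\sigma}\rho_m\sqrt{\sigma}$; applying the operator monotone function $x\mapsto \sqrt{x}$ preserves the inequality, and taking the trace (a positive linear functional) then yields $\FN(\rho_{m+1},\sigma) \geq \FN(\rho_m,\sigma)$. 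Since $\FN(\rho_m,\sigma_m^\rho) = \FN(\rho_m,\sigma)$ (as noted after Eq.~\eqref{EqnTrunF}, because $\sqrt{\rho_m}$ is supported on the range of $\Pi_m^\rho$), this is the desired monotonicity.

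The only delicate point is the appeal to operator monotonicity of $\sqrt{\cdot}$ together with the trace-monotonicity step; both are standard but should be stated explicitly so the reader sees why no normalization is required. Apart from that, the proof is essentially a two-line application of known results, and I do not anticipate any real obstacle beyond bookkeeping with the sub-normalized states and the projector identities $\Pi_m^\rho\Pi_{m+1}^\rho = \Pi_m^\rho$.
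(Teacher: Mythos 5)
Your proof is correct, and the second half (the generalized fidelity decreasing in $m$) is exactly the paper's argument: both identify $(\rho_m,\sigma_m^\rho)=(\EC_m(\rho_{m+1}),\EC_m(\sigma_{m+1}^\rho))$ and invoke monotonicity of $\FN_*$ under CPTNI maps. For the first half, however, you take a genuinely different route. The paper writes $\rho_m=\Pi_{m-1}^\rho\rho\Pi_{m-1}^\rho+\ket{r_m}\rho_{mm}\bra{r_m}$ and applies strong concavity of the fidelity, dropping the second (nonnegative) term — the same device used for the lower bound in Proposition~\ref{prop1}, which keeps the appendix's toolkit uniform. You instead observe that $\rho_{m+1}-\rho_m=r_{m+1}\dya{r_{m+1}}\geq 0$, conjugate by $\sqrt{\sigma}$, apply operator monotonicity of $x\mapsto\sqrt{x}$ (L\"owner--Heinz), and take the trace. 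Your route is, if anything, slightly stronger: it establishes that $A\mapsto\Tr\sqrt{\sqrt{\sigma}A\sqrt{\sigma}}$ is monotone in the operator order on positive semidefinite $A$, so the monotonicity in $m$ follows for any positive increment, not just the specific rank-one eigenprojection step; the cost is an appeal to L\"owner--Heinz rather than to strong concavity, both of which are standard. Your identifications $\|\sqrt{\rho_m}\sqrt{\sigma}\|_1=\Tr\sqrt{\sqrt{\sigma}\rho_m\sqrt{\sigma}}$ and $\FN(\rho_m,\sigma_m^\rho)=\FN(\rho_m,\sigma)$ are both consistent with Eqs.~\eqref{eqnFidelity} and \eqref{EqnTrunF}, so there is no gap.
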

\begin{proof}
The monotonicity of the truncated fidelity follows from the same procedure as the one used to  derived  (\ref{EqnProofTruncFid}). Namely, we expand   $\rho_{m}=\Pi_{m-1}^\rho\rho\Pi_{m-1}^\rho+\ket{r_m}\rho_{mm} \bra{r_m}$, with $\rho_{mm}=\matl{r_m}{\rho}{r_m}$, and we define $\hat{\rho}_m = \rho_m / p_m$ with $p_m = \Tr \rho_m$. Applying strong concavity gives
\begin{eqnarray}
 \FN(\rho_m,\sigma_m^\rho) = \sqrt{p_m}\FN(\hat{\rho}_m,\sigma ) &\geq& \sqrt{p_m}\left(   \sqrt{p_{m-1} / p_{m}}  \FN\left(\frac{\Pi_{m-1}^\rho\rho\Pi_{m-1}^\rho}{p_{m-1}},\sigma \right) +\sqrt{\rho_{mm}/p_m} \FN\left(\dya{r_m},\sigma \right) \right) \nonumber\\
 &\geq& \FN(\rho_{m-1},\sigma ) = \FN(\rho_{m-1},\sigma_{m-1}^\rho)\,.
\end{eqnarray}
On the other hand, since  $\rho_m=\EC_m(\rho_{m+1})$ and  $\sigma_m^\rho=\EC_m(\sigma_{m+1}^\rho)$, then by the monotonocity under CPTNI maps of the generalized fidelity we get $\FN_\ast(\rho_{m+1},\sigma_{m+1}^\rho)\leq \FN_\ast(\rho_m,\sigma_{m}^\rho)$.
\end{proof}

\section{Proof of Proposition~\ref{prop4}} \label{proofprop4SM}

Before proving Proposition~\ref{prop4}, we first prove four useful lemmas. Of particular interest is Lemma~\ref{lemmaMTSM} below since it generalizes the main result of \cite{coles2019strong} (corresponding to the special case of $\epsilon =0$). All of these lemmas employ our notion of $\epsilon$-rank. We remind the reader that this is defined as follows:
\begin{equation}
\rank_{\epsilon}(\rho) = \min \{  m \in \{1,..., d\} : \|\rho - \rho_m\|_1 \leq \epsilon  \}\,,  
\label{epsilonrankSM}
\end{equation}
where $d$ is the Hilbert-space dimension. 

\begin{lemma}\label{lemma1}
For any positive semi-definite operators $A$, $B$, and $C$, where $C$ is normalized ($\Tr(C) = 1$), we have
\begin{equation}
   |\FN(A,C)-\FN(B,C)|\leq\sqrt{\Tr \left[\left(\sqrt{A} - \sqrt{B}\right)^2 \right]}\,. 
   \label{lemma1SM}
\end{equation}
\end{lemma}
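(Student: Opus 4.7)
The plan is to exploit the operator form $F(X,C) = \|\sqrt{X}\sqrt{C}\|_1$ of the fidelity together with two standard facts about Schatten norms: the reverse triangle inequality for the trace norm and the matrix Hölder inequality bounding $\|\cdot\|_1$ by a product of $\|\cdot\|_2$ (Hilbert-Schmidt) norms.

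First, I would write
\begin{equation*}
|\FN(A,C) - \FN(B,C)| = \bigl|\,\|\sqrt{A}\sqrt{C}\|_1 - \|\sqrt{B}\sqrt{C}\|_1\,\bigr| \leq \|(\sqrt{A}-\sqrt{B})\sqrt{C}\|_1,
\end{equation*}
using the reverse triangle inequality for $\|\cdot\|_1$. Second, I would apply the Hölder-type inequality $\|MN\|_1 \leq \|M\|_2\,\|N\|_2$ (a direct consequence of the Cauchy--Schwarz inequality for the Hilbert--Schmidt inner product) to the product $M = \sqrt{A}-\sqrt{B}$ and $N = \sqrt{C}$, which yields
\begin{equation*}
\|(\sqrt{A}-\sqrt{B})\sqrt{C}\|_1 \leq \sqrt{\Tr\bigl[(\sqrt{A}-\sqrt{B})^2\bigr]}\cdot\sqrt{\Tr(C)}.
\end{equation*}
The hypothesis that $C$ is normalized then gives $\sqrt{\Tr(C)} = 1$, and combining the two displays yields exactly \eqref{lemma1SM}.

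I do not anticipate a real obstacle: the proof is essentially a two-line application of textbook Schatten-norm inequalities, and the only place where the normalization of $C$ is used is to kill the factor $\sqrt{\Tr(C)}$. The one subtlety worth flagging is that the inequality $\|MN\|_1 \leq \|M\|_2\|N\|_2$ holds for arbitrary (not necessarily positive) operators of compatible dimension, so it applies to $M = \sqrt{A} - \sqrt{B}$ even though this difference need not be positive semi-definite. No assumption on the ranks of $A$, $B$, or $C$ is required, which is convenient because the lemma will later be invoked with operators whose ranks are controlled only approximately via the $\epsilon$-rank defined in \eqref{epsilonrankSM}.
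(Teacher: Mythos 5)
Your proof is correct and follows essentially the same route as the paper: bound the difference of fidelities by $\|(\sqrt{A}-\sqrt{B})\sqrt{C}\|_1$ and then apply the Hölder/Cauchy--Schwarz inequality $\|MN\|_1\leq\|M\|_2\|N\|_2$ together with $\Tr(C)=1$. The only difference is cosmetic: you invoke the reverse triangle inequality for the trace norm directly, whereas the paper re-derives that step by hand via the variational characterization $\|M\|_1=\max_V|\Tr(MV)|$ with optimal unitaries.
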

\begin{proof}
We first write the left-hand-side as
\begin{align}
    \Delta\FN=\FN(A , C)-\FN(B , C)= \left\|\sqrt{A}\sqrt{C}\right\|_1 -\left\|\sqrt{B}\sqrt{C}\right\|_1\,.
    \label{DeltaFSM}
\end{align}
It is well known that given a $d\times d$ matrix $M$ with singular value decomposition $M = U_0SU_1$, then $\|M\|_1=|\underset{V\in \UC_d}{\max}\,\,\Tr (M V)|=\Tr(M V_{\text{opt}})$ with $\UC_d$ the set of $d\times d$ unitary matrices and $V_{\text{opt}}=U_1\ad U_0\ad$. Hence, for every pair of $d\times d$ positive semi-definite operators $A$ and $C$ there exists $V_{\text{opt}}$ such that $\FN(A , C)=|\underset{V\in\UC_d}{\max}\,\,\Tr(\sqrt{A}\sqrt{C}V)|=|\Tr(\sqrt{A}\sqrt{C}V_{\text{opt}})|$. (We remark that in the last expression the absolute value is not necessary but we include it for clarity.) Let $V_{\text{opt}}$ and $W_{\text{opt}}$ be the optimal unitaries for $F(A,C)$ and $F(B,C)$, respectively. Then \eqref{DeltaFSM} can be expressed as 
\begin{eqnarray}
    \Delta\FN   &=& \left|\Tr(\sqrt{A}\sqrt{C}V_{\text{opt}})\right|-\left|\Tr(\sqrt{B}\sqrt{C}W_{\text{opt}})\right|\,,\\
    &\leq& \left|\Tr(\sqrt{A}\sqrt{C}V_{\text{opt}})\right|-\left|\Tr(\sqrt{B}\sqrt{C}V_{\text{opt}})\right|\,.\\
               &\leq& \left|\Tr(\sqrt{A}\sqrt{C}V_{\text{opt}})-\Tr(\sqrt{B}\sqrt{C}V_{\text{opt}})\right|\,,\\\
                &=& \left|\Tr\left[(\sqrt{A}-\sqrt{B})\sqrt{C}V_{\text{opt}}\right]\right|\,.
\end{eqnarray}
where we have used  $|\Tr\sqrt{B}\sqrt{C}W_{\text{opt}}|\geq|\Tr\sqrt{B}\sqrt{C}V_{\text{opt}}|$ and  the triangular inequality for the absolute value. 
Then, by means of the matrix H\"{o}lder inequality \cite{baumgartner2011inequality} and the fact that  $C$ is normalized,  we obtain
\begin{align}
    \Delta\FN   \leq \sqrt{\Tr\left[(\sqrt{C}V_{\text{opt}})\ad \sqrt{C}V_{\text{opt}}\right]} \sqrt{\Tr\left[\left(\sqrt{A}-\sqrt{B}\right)^2\right]} = \sqrt{\Tr \left[\left(\sqrt{A} - \sqrt{B}\right)^2 \right]}\,.
\label{eqn500005}
\end{align}
By symmetry, one can replace $\Delta\FN$ with $-\Delta\FN$ in \eqref{eqn500005}. So \eqref{eqn500005} also holds for $|\Delta\FN |$.

\end{proof}

\begin{lemma}\label{lemma4}
Let $\rho$ be a quantum state and let $\rho'=U\ad \ZC(U \rho U\ad) U$ be its VQSD approximation defined in \eqref{EqnRhoPrime} of the main text. Then 
\begin{equation}
\rank_\epsilon(\rho)\leq \rank_\epsilon(\rho')\,.
\end{equation}
\end{lemma}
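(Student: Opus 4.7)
My plan is to reduce the claim to a majorization statement between the eigenvalues of $\rho$ and those of $\rho'$, and then translate that majorization into a comparison of the tail-weights that define $\epsilon$-rank.

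First I would observe that $\rho'$ is unitarily equivalent to $\ZC(U\rho U\ad)$, so its eigenvalues coincide with the diagonal entries of $U\rho U\ad$ in the computational basis, while the eigenvalues of $U\rho U\ad$ itself are just the eigenvalues of $\rho$. At this point the classical Schur--Horn theorem applies: for any Hermitian matrix, the vector of diagonal entries is majorized by the vector of eigenvalues. Writing $r_1\geq r_2\geq\cdots$ for the eigenvalues of $\rho$ and $r_1'\geq r_2'\geq\cdots$ for those of $\rho'$ (both in decreasing order), Schur--Horn yields
\begin{equation}
\sum_{i=1}^{m} r_i' \;\leq\; \sum_{i=1}^{m} r_i \qquad \text{for every } m.
\end{equation}

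Next, since $\rho$ and $\rho'$ are both normalized states, the truncation errors in trace norm are precisely the tails of the spectra:
\begin{equation}
\|\rho-\rho_m\|_1 = 1-\sum_{i=1}^m r_i, \qquad \|\rho'-\rho'_m\|_1 = 1-\sum_{i=1}^m r_i'.
\end{equation}
The majorization inequality then flips sign to give $\|\rho-\rho_m\|_1 \leq \|\rho'-\rho'_m\|_1$ for every $m$. Consequently, if $m$ is any integer for which $\|\rho'-\rho'_m\|_1\leq\epsilon$, the same $m$ also certifies $\|\rho-\rho_m\|_1\leq\epsilon$. Minimizing over such $m$, definition \eqref{epsilonrankSM} yields $\rank_\epsilon(\rho)\leq\rank_\epsilon(\rho')$, which is the claim.

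The only nontrivial ingredient is the Schur--Horn majorization step; everything else is bookkeeping from the definitions of $\rho'$ and of $\rank_\epsilon$. I do not anticipate a genuine obstacle, but one small point worth being careful about is that Schur--Horn requires the diagonal entries and eigenvalues to be compared after sorting in the same order, which is automatic here since $r_i'$ are by construction the eigenvalues of $\rho'$ and can be taken in decreasing order.
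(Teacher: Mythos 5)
Your proof is correct and follows essentially the same route as the paper's: both apply Schur--Horn to conclude that the eigenvalues of $\rho$ majorize those of $\rho'$, then convert the partial-sum inequality into the tail comparison $\|\rho-\rho_m\|_1\leq\|\rho'-\rho'_m\|_1$ and read off the $\epsilon$-rank inequality from the definition. The only cosmetic difference is that you state the tail inequality for every $m$ and minimize, while the paper applies it only at $m=\rank_\epsilon(\rho')$.
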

\begin{proof}
Let us first define $\rhot_Z=\ZC(\rhot)$, with  $\rhot= U\rho U\ad$ and $\ZC$ the dephasing channel in the computational basis, such that $\rho'=U\ad \rhot_Z U$. Due to Schur-Horn's theorem we have that the eigenvalues of $\rhot$ majorize its diagonal elements, implying that $\rhot$ majorizes $\rhot_Z$. Moreover, since $\rhot$ has the same eigenvalues as $\rho$, and $\rho'$ has the same eigenvalues as $\rhot_Z$, then we have that $\rho$ majorizes $\rho'$. In summary,
\begin{equation}
    \rhot \succ \rhot_Z,\quad\text{and}\quad\rho \succ \rho'\,.
\end{equation}
Next, let us define $\hat{m}'=\rank_\epsilon(\rho')$. Since $\rho\succ\rho'$, we have $\sum_{j\leq\hat{m}'}r_j \geq \sum_{j\leq\hat{m}'}r_j'$ and hence 
\begin{equation}
\|\rho - \rho_{\hat{m}'}\|_1 = 1-\sum_{j\leq\hat{m}'}r_j\leq 1-\sum_{j\leq\hat{m}'}r_j'= \|\rho' - \rho_{\hat{m}'}'\|_1 \leq \epsilon.
\end{equation}
The fact that $\|\rho - \rho_{\hat{m}'}\|_1 \leq \epsilon$ implies that $\rank_{\epsilon}(\rho) \leq \hat{m}'$ and hence
\begin{equation}
    \rank_\epsilon(\rho)\leq \rank_\epsilon(\rho')\,.
\end{equation}
\end{proof}

\begin{lemma}\label{lemma2}
Let $\Delta = \rho - \sigma = \Delta^+ - \Delta^-$. Here $\Delta^+$ and $\Delta^-$ respectively correspond to the positive and negative part of $\Delta$, with $\Delta^+\geq0$, $\Delta^-\geq0$, and $\Delta^+\Delta^-=0$. Then \begin{equation}
\rank_{\epsilon}(\Delta^{+})\leq \rank_{\epsilon}(\rho)\,,\quad\text{and}\quad \rank_{\epsilon}(\Delta^-)\leq \rank_{\epsilon}(\sigma)\,.
\end{equation}
\label{lemmaRankEpsilon}
\end{lemma}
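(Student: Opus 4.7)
The plan is to control the eigenvalues of $\Delta^+$ by those of $\rho$, and then convert that eigenvalue inequality into a tail-sum (and hence $\epsilon$-rank) inequality. Throughout I will order eigenvalues in decreasing order and write $\lambda_i(A)$ for the $i$-th such eigenvalue of a Hermitian operator $A$.

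First I would invoke Weyl's inequality for $\Delta = \rho + (-\sigma)$ in the form $\lambda_i(A+B) \leq \lambda_i(A) + \lambda_1(B)$. Since $\sigma \geq 0$, we have $\lambda_1(-\sigma) = -\lambda_{\min}(\sigma) \leq 0$, so this yields
\begin{equation}
\lambda_i(\Delta) \;\leq\; \lambda_i(\rho) + \lambda_1(-\sigma) \;\leq\; r_i \quad \text{for all } i.
\end{equation}
Next, because $\Delta^+ = \sum_{i:\, \lambda_i(\Delta)>0} \lambda_i(\Delta)\, |v_i\rangle\langle v_i|$ (where $|v_i\rangle$ are the eigenvectors of $\Delta$), the decreasing eigenvalues of $\Delta^+$ are exactly $\lambda_i(\Delta^+) = \max(\lambda_i(\Delta),0)$. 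Combining this with the above and with $r_i \geq 0$ gives the pointwise bound
\begin{equation}
\lambda_i(\Delta^+) \;\leq\; \max(\lambda_i(\Delta),0) \;\leq\; \max(r_i,0) \;=\; r_i.
\end{equation}

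Now let $m = \rank_\epsilon(\rho)$, so that by definition of the $\epsilon$-rank and of $\rho_m$ we have $\|\rho - \rho_m\|_1 = \sum_{i>m} r_i \leq \epsilon$. Since $\Delta^+$ is positive semidefinite, the same identity holds for its best rank-$m$ truncation, and the pointwise bound above yields
\begin{equation}
\|\Delta^+ - (\Delta^+)_m\|_1 \;=\; \sum_{i>m} \lambda_i(\Delta^+) \;\leq\; \sum_{i>m} r_i \;\leq\; \epsilon.
\end{equation}
By the definition of $\rank_\epsilon$ as a minimum, this forces $\rank_\epsilon(\Delta^+) \leq m = \rank_\epsilon(\rho)$.

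Finally, the bound on $\Delta^-$ follows by symmetry: $\Delta^- = (-\Delta)^+$ is the positive part of $\sigma - \rho$, so running the identical argument with the roles of $\rho$ and $\sigma$ exchanged yields $\rank_\epsilon(\Delta^-) \leq \rank_\epsilon(\sigma)$. The only nontrivial step is the Weyl-inequality input; once that eigenvalue-majorization-style bound is in hand, the passage from individual eigenvalues to tail sums and then to $\epsilon$-ranks is immediate.
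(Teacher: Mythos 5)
Your proof is correct and follows essentially the same route as the paper's: both rest on Weyl's inequality to obtain the pointwise eigenvalue bound $\lambda_i(\Delta^+)\leq r_i$ (you apply it to $\Delta=\rho+(-\sigma)$, the paper to $\rho=\Delta+\sigma$, which are equivalent forms), and then pass to tail sums to bound the $\epsilon$-rank. The symmetric treatment of $\Delta^-$ also matches the paper.
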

\begin{proof}
Let $\{r_j\}$, $\{s_j\}$, and $\{\delta_j\}$ respectively denote the eigenvalues of $\rho$, $\sigma$, and $\Delta$, where the eigenvalues in each set are in decreasing order. By means of Weyl's inequality \cite{horn1990matrix,weyl1912asymptotische} applied to $\rho = \Delta + \sigma$ we get
\begin{equation}
    r_j \geq \delta_j + s_d \,,\quad\forall j\,.
\end{equation}
Because $\sigma\geq 0$, we have $s_d\geq 0$, and hence
\begin{equation}
    r_j \geq \delta_j\,,\quad\forall j\,.
    \label{Weyl1}
\end{equation}
Since $\Delta^+$ is the positive part of $\Delta$, then from the previous equation its eigenvalues (which we denote $\{\delta_j^+\}$ and which are in decreasing order) are such that 
\begin{equation}
    r_j \geq \delta_j^+\geq 0\,,\quad \forall j\,.
        \label{Weyl2}
\end{equation}
Let us now define $\hat{m}=\rank_\epsilon(\rho)$. By means of the definition of $\epsilon$-rank \eqref{epsilonrankSM}, we have that
\begin{equation}
\epsilon\geq  \|\rho - \rho_{\hat{m}}\|_1=\sum_{j>\hat{m}}r_j\,.    
\end{equation}
Moreover, from \eqref{Weyl2} we get $\sum_{j>\hat{m}}r_j\geq \sum_{j>\hat{m}}\delta_j^+=\|\Delta^+ - \Delta^+_{\hat{m}}\|_1$, which gives
\begin{equation}
    \epsilon\geq\|\rho - \rho_{\hat{m}}\|_1 \geq\|\Delta^+ - \Delta^+_{\hat{m}}\|_1\,.
\end{equation}
This implies that $\rank_{\epsilon}(\Delta^{+})\leq \hat{m}$ and hence that $\rank_{\epsilon}(\Delta^{+})\leq \rank_{\epsilon}(\rho)$.

Similarly we denote the set of eigenvalues of $\Delta^-$ as $\{\delta_j^-\}$.  From Weyl's inequality we now have  $s_j \geq \delta_j^-$, $\forall j$.  Let $\tilde{m}=\rank_\epsilon(\sigma)$ such that $\epsilon\geq  \|\sigma - \sigma_{\tilde{m}}\|_1=\sum_{j>\tilde{m}}s_j$. Since $\sum_{j>\tilde{m}}s_j\geq \sum_{j>\tilde{m}}\delta_j^-$, we then get
\begin{equation}
\epsilon\geq\|\sigma - \sigma_{\tilde{m}}\|_1 \geq\|\Delta^- - \Delta^-_{\tilde{m}}\|_1\,,    
\end{equation}
which implies $\rank_{\epsilon}(\Delta^{-}) \leq \tilde{m}$ and hence $\rank_{\epsilon}(\Delta^{-}) \leq \rank_{\epsilon}(\sigma)$.
\end{proof}

\begin{lemma}
Let $D_T(\rho,\sigma)=\frac{1}{2}\Tr|\rho-\sigma|$ denote the trace distance and $D_{\HS}(\rho,\sigma)$ the Hilbert-Schmidt distance, then 
\begin{equation}
(D_T(\rho, \sigma) - \epsilon)^2 \leq R_{\epsilon}(\rho, \sigma) D_{\HS}(\rho, \sigma)\,,
\label{eqnHSDandTD}
\end{equation}
where $R_{\epsilon}(\rho, \sigma) = \rank_{\epsilon}(\rho)\rank_{\epsilon}(\sigma)/(\rank_{\epsilon}(\rho)+\rank_{\epsilon}(\sigma))$.
\label{lemmaMTSM}
\end{lemma}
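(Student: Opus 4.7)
The plan is to generalize the proof of \cite{coles2019strong} (the $\epsilon = 0$ case) by combining the Jordan decomposition $\Delta = \rho - \sigma = \Delta^+ - \Delta^-$ with the $\epsilon$-rank bounds supplied by Lemma~\ref{lemmaRankEpsilon}. First I would observe that $\Tr\Delta = 0$ forces $\Tr \Delta^+ = \Tr \Delta^- = D_T(\rho,\sigma)$, since $D_T = \tfrac{1}{2}\|\Delta\|_1 = \tfrac{1}{2}(\Tr \Delta^+ + \Tr \Delta^-)$.

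Next, set $\hat m = \rank_\epsilon(\rho)$ and $\tilde m = \rank_\epsilon(\sigma)$. Lemma~\ref{lemmaRankEpsilon} gives $\rank_\epsilon(\Delta^+) \leq \hat m$ and $\rank_\epsilon(\Delta^-) \leq \tilde m$, so the corresponding truncations $\Delta^+_{\hat m}$ and $\Delta^-_{\tilde m}$ to the largest eigenvalues obey $\Tr \Delta^+_{\hat m} \geq D_T - \epsilon$ and $\Tr \Delta^-_{\tilde m} \geq D_T - \epsilon$. The key step is the matrix Cauchy--Schwarz inequality $\|X\|_1^2 \leq \rank(X)\,\|X\|_2^2$ applied to these truncations, combined with $\|\Delta^\pm_k\|_2^2 \leq \|\Delta^\pm\|_2^2 = \Tr(\Delta^\pm)^2$; assuming $D_T \geq \epsilon$ (otherwise the bound is trivial once one reads $D_T - \epsilon$ as its non-negative part) this yields
\begin{equation}
(D_T - \epsilon)^2 \leq \hat m\, \Tr(\Delta^+)^2 \quad \text{and} \quad (D_T - \epsilon)^2 \leq \tilde m\, \Tr(\Delta^-)^2.
\end{equation}

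To close, I would combine these two inequalities via a harmonic mean: dividing each by the corresponding $\epsilon$-rank and adding produces $(D_T - \epsilon)^2 \bigl(1/\hat m + 1/\tilde m\bigr) \leq \Tr(\Delta^+)^2 + \Tr(\Delta^-)^2 = \Tr \Delta^2 = D_{\HS}(\rho, \sigma)$, where the last equality uses the orthogonality $\Delta^+\Delta^- = 0$. Rearranging yields the claimed $(D_T - \epsilon)^2 \leq R_\epsilon(\rho,\sigma)\, D_{\HS}(\rho,\sigma)$. The only genuinely new ingredient beyond the $\rank$-based proof is the truncation step, which is already packaged in Lemma~\ref{lemmaRankEpsilon}; the remaining Cauchy--Schwarz plus harmonic-mean manipulations are structurally identical to the earlier argument, so the main task is simply careful bookkeeping of the $\epsilon$ corrections rather than any serious conceptual obstacle.
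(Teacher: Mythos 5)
Your proposal is correct and follows essentially the same route as the paper: Jordan decomposition, the $\epsilon$-rank bounds of Lemma~\ref{lemmaRankEpsilon}, a rank-based Cauchy--Schwarz step (the paper phrases it as the purity lower bound $\Tr\tau^2 \geq 1/\rank(\tau)$ for the normalized truncated parts, which is the same inequality), and the harmonic-mean combination. The only cosmetic differences are that the paper truncates $\Delta^{\pm}$ at $\rank_\epsilon(\Delta^{\pm})$ rather than at $\rank_\epsilon(\rho)$ and $\rank_\epsilon(\sigma)$ (both choices work), and your explicit caveat about $D_T \geq \epsilon$ is a point the paper leaves implicit.
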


\begin{proof}
Let $\Delta^+_{m_+}$ be a truncated version of $\Delta^+$ obtained by projecting $\Delta^+$ onto the subspace associated with its ${m_+}$-largest eigenvalues such that ${m_+}=\rank_\epsilon(\Delta^+)$. We define $\Delta^-_{m_-}$ analogously, with ${m_-}=\rank_\epsilon(\Delta^-)$. Then, consider the following density matrices (positive semidefinite matrices with trace one):
\begin{equation}
\tau^+_{m_+} = \Delta^+_{m_+} / \Tr(\Delta^+_{m_+})\,, \qquad \tau^-_{m_-} = \Delta^-_{m_-} / \Tr(\Delta^-_{m_-})\,.
\label{taudef}
\end{equation}
Since the purity of a density matrix is lower bounded by the inverse of its rank, we have
\begin{eqnarray}
\Tr\left[(\tau^+_{m_+})^2\right] &\geq& 1/\rank(\Delta^+_{m_+})=1/m_+=1/\rank_\epsilon(\Delta^+) \geq 1/\rank_\epsilon(\rho)\,,\\
\Tr\left[(\tau^-_{m_-})^2\right] &\geq& 1/\rank(\Delta^-_{m_-})=1/m_-=1/\rank_\epsilon(\Delta^-) \geq 1/\rank_\epsilon(\sigma)\,,
\end{eqnarray}
where the right inequality follows from Lemma~\ref{lemma2}. Inserting \eqref{taudef} and summing the two resulting inequalities gives:
\begin{align}
\Tr\left[(\Delta^+_{m_+})^2\right]+\Tr\left[(\Delta^-_{m_-})^2\right] \geq \frac{\Tr(\Delta^+_{m_+})^2}{\rank_\epsilon(\rho)}+\frac{\Tr(\Delta^-_{m_-})^2}{\rank_\epsilon(\sigma)}\,.
\label{eqlemma21}
\end{align}
Let us first consider the left-hand-side of \eqref{eqlemma21}. Since $\Tr\left[(\Delta^+)^2\right]\geq \Tr\left[(\Delta^+_{m_+})^2\right]$ and $\Tr\left[(\Delta^-)^2\right]\geq \Tr\left[(\Delta^-_{m_-})^2\right]$, and using the fact that $D_{\HS}(\rho,\sigma)= \Tr\left[(\Delta^+)^2\right]+\Tr\left[(\Delta^-)^2\right]$, we obtain
\begin{equation}
    D_{\HS}(\rho,\sigma)\geq\Tr\left[(\Delta^+_{m_+})^2\right]+\Tr\left[(\Delta^-_{m_-})^2\right]\,.
    \label{lemma3DHS}
\end{equation}
On the other hand, since $m_+=\rank_\epsilon(\Delta^+)$ and $m_-=\rank_\epsilon(\Delta^-)$, then from the definition of $\epsilon$-rank we get 
\begin{equation}
    \|\Delta^+ - \Delta^+_{m_+} \|_1 = \Tr(\Delta^+) - \Tr(\Delta^+_{m_+}) \leq \epsilon\,,\qquad  \|\Delta^- - \Delta^-_{m_-} \|_1 = \Tr(\Delta^-) - \Tr(\Delta^-_{m_-}) \leq \epsilon\,.
\end{equation}
Rewriting this and using $D(\rho,\sigma)=\Tr(\Delta^+)=\Tr(\Delta^-)$ gives
\begin{equation}
    \Tr(\Delta^+_{m_+})\geq \Tr(\Delta^+)-\epsilon=D(\rho,\sigma)-\epsilon\,,\qquad     \Tr(\Delta^-_{m_-})\geq \Tr(\Delta^-)-\epsilon=D(\rho,\sigma)-\epsilon\,.
    \label{lemma3DT}
\end{equation}
Hence, combining \eqref{eqlemma21} with \eqref{lemma3DHS} and \eqref{lemma3DT}, we have
\begin{align}
D_{\HS}(\rho,\sigma) \geq \left(D_T(\rho,\sigma)-\epsilon\right)^2\left(\frac{1}{\rank_\epsilon(\rho)}+\frac{1}{\rank_\epsilon(\sigma)}\right)\,,
\end{align}
which is equivalent to \eqref{eqnHSDandTD}.
\end{proof}

\begin{propositionSM}\label{prop4SM}
The following certified cost function bounds (CCFB) hold:
\begin{equation}
    \FN(\rho'_m,\sigma_m^{\rho'})-\delta\leq \FN(\rho,\sigma)\leq\FN_\ast(\rho'_m,\sigma_m^{\rho'})+\delta\,,
\label{EqnCCFBSM}
\end{equation}
with $\delta=\min(\delta_1,\delta_2)$, and where $\delta_1=(4 \rank(\rho) C)^{1/4}$ and  $\delta_2=(2\epsilon_m'+\sqrt{2 m C})^{1/2}$ with $\epsilon_m'=1-\Tr\rho'_m$.
\end{propositionSM}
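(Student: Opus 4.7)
\textit{Proof plan.} The plan is to reduce the statement to comparing $F(\rho,\sigma)$ with $F(\rho',\sigma)$, since Proposition~\ref{prop1SM} applied to $\rho'$ in place of $\rho$ immediately sandwiches
\begin{equation}
F(\rho_m',\sigma_m^{\rho'})\leq F(\rho',\sigma)\leq F_\ast(\rho_m',\sigma_m^{\rho'}).
\end{equation}
Once I show $|F(\rho,\sigma) - F(\rho',\sigma)| \leq \delta$ for each candidate $\delta\in\{\delta_1,\delta_2\}$, inserting this perturbation into the sandwich yields~\eqref{EqnCCFBSM} with $\delta=\min(\delta_1,\delta_2)$.

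To bound $|F(\rho,\sigma) - F(\rho',\sigma)|$ I would invoke Lemma~\ref{lemma1} with $A=\rho$, $B=\rho'$, and the normalized $C=\sigma$, giving
\begin{equation}
|F(\rho,\sigma) - F(\rho',\sigma)| \leq \sqrt{\Tr[(\sqrt{\rho}-\sqrt{\rho'})^2]}.
\end{equation}
Then the Powers--St\o{}rmer inequality $\Tr[(\sqrt{A}-\sqrt{B})^2]\leq \|A-B\|_1$ translates this into a trace-distance statement,
\begin{equation}
|F(\rho,\sigma) - F(\rho',\sigma)| \leq \sqrt{\|\rho-\rho'\|_1}=\sqrt{2\,D_T(\rho,\rho')}.
\end{equation}
The remaining task is to bound $D_T(\rho,\rho')$ by the computable Hilbert--Schmidt quantity $C=D_{\HS}(\rho,\rho')$; Lemma~\ref{lemmaMTSM} is designed for exactly this, and applying it at two different choices of $\epsilon$ will deliver $\delta_1$ and $\delta_2$.

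For $\delta_1$, I would apply Lemma~\ref{lemmaMTSM} with $\epsilon=0$. By Lemma~\ref{lemma4} we have $\rank(\rho)\leq \rank(\rho')$, so the rank factor satisfies $R_0(\rho,\rho')=\rank(\rho)\rank(\rho')/(\rank(\rho)+\rank(\rho'))\leq \rank(\rho)$, yielding $D_T(\rho,\rho')\leq \sqrt{\rank(\rho)\,C}$ and hence $|F(\rho,\sigma)-F(\rho',\sigma)|\leq (4\rank(\rho)\,C)^{1/4}=\delta_1$. For $\delta_2$, I would apply Lemma~\ref{lemmaMTSM} with $\epsilon = \epsilon_m' = \|\rho'-\rho_m'\|_1$. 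By definition of the $\epsilon$-rank, $\rank_{\epsilon_m'}(\rho')\leq m$, and Lemma~\ref{lemma4} then propagates this to $\rank_{\epsilon_m'}(\rho)\leq m$. A short algebraic check shows $ab/(a+b)\leq m/2$ whenever $a,b\leq m$, so $R_{\epsilon_m'}(\rho,\rho')\leq m/2$. Lemma~\ref{lemmaMTSM} then gives $D_T(\rho,\rho')\leq \epsilon_m'+\sqrt{(m/2)\,C}$, whence $\|\rho-\rho'\|_1\leq 2\epsilon_m'+\sqrt{2mC}$ and $|F(\rho,\sigma)-F(\rho',\sigma)|\leq (2\epsilon_m'+\sqrt{2mC})^{1/2}=\delta_2$.

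The main subtlety is the $\delta_2$ derivation: the key observation is that $\epsilon_m'$, which is experimentally accessible from the VQSD output, is precisely the parameter that makes $m$ a valid $\epsilon$-rank of both $\rho'$ (by construction) and $\rho$ (through Lemma~\ref{lemma4}), which is what lets one invoke the $\epsilon$-rank generalization in Lemma~\ref{lemmaMTSM}. In contrast, $\delta_1$ avoids the $\epsilon$-rank machinery but requires an a priori promise on $\rank(\rho)$. Everything else is routine chaining: Lemma~\ref{lemma1} $\to$ Powers--St\o{}rmer $\to$ Lemma~\ref{lemmaMTSM} $\to$ Proposition~\ref{prop1SM}, taking the tighter of the two bounds via $\delta=\min(\delta_1,\delta_2)$.
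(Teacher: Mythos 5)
Your proposal is correct and follows essentially the same route as the paper's proof: Lemma~\ref{lemma1} plus the inequality $\Tr[(\sqrt{\rho}-\sqrt{\rho'})^2]\leq\|\rho-\rho'\|_1$ (your Powers--St\o{}rmer step is exactly the paper's Kholevo--Audenaert step $1-\Tr\sqrt{\rho}\sqrt{\rho'}\leq D_T(\rho,\rho')$ for normalized states), then Lemma~\ref{lemmaMTSM} at $\epsilon=0$ for $\delta_1$ and at $\epsilon=\epsilon_m'$ (with Lemma~\ref{lemma4} controlling the rank factor) for $\delta_2$, finally sandwiching $F(\rho',\sigma)$ via Proposition~\ref{prop1SM}. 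The only cosmetic difference is that for $\delta_1$ you invoke Lemma~\ref{lemma4}, which is unnecessary since $R(\rho,\rho')\leq\min(\rank\rho,\rank\rho')\leq\rank\rho$ holds unconditionally.
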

\begin{proof}
Let us first derive the $\delta_1$ bound. By means of Lemma \ref{lemma1} we have $|\FN(\rho,\sigma)-\FN(\rho',\sigma)|\leq\sqrt{\Tr \left[\left(\sqrt{\rho} - \sqrt{\rho'}\right)^2 \right]}=  \sqrt{2\left(1-\Tr\sqrt{\rho}\sqrt{\rho'}\right)}$. Then from the inequality $1-\Tr\sqrt{\rho}\sqrt{\sigma}\leq D_T(\rho,\sigma)$ \cite{Kholevo1972,audenaert2012comparisons}, we get
\begin{equation}
    |\FN(\rho,\sigma)-\FN(\rho',\sigma)|\leq \sqrt{2 D_T(\rho,\rho')}\,.
    \label{DeltaDT}
\end{equation}
Moreover, taking $\epsilon=0$ in Lemma \ref{lemmaMTSM} results in the bound between the trace distance and Hilbert-Schmidt distance from \cite{coles2019strong}:
\begin{equation}
D_T(\rho,\rho')^2 \leq  R(\rho, \rho')\cdot D_{\HS}(\rho,\rho')\,,
\end{equation}
with $R(\rho, \sigma)$ the reduced rank defined as  $R(\rho, \sigma) = \rank(\rho)\rank(\sigma)/(\rank(\rho)+\rank(\sigma))$. By noting that  $R(\rho, \rho')\leq\rank(\rho)$ and recalling that $C=D_{\HS}(\rho,\rho')$ we find 
\begin{equation}
    |\FN(\rho,\sigma)-\FN(\rho',\sigma)|\leq (4 \rank(\rho) C)^{1/4}\,.
    \label{DeltaFd1}
\end{equation}
Applying the TFB of Proposition \ref{prop1SM}  to $\FN(\rho',\sigma)$ yields
\begin{equation}
    \FN(\rho'_m,\sigma_m^{\rho'})\leq \FN(\rho',\sigma)\leq\FN_\ast(\rho_m',\sigma_m^{\rho'}
    )\,.
\end{equation}
Combining this result with  \eqref{DeltaFd1} gives the $\delta_1$ bound in \eqref{EqnCCFBSM}, i.e.,
\begin{equation}
    \FN(\rho'_m,\sigma_m^{\rho'})-(4 \rank(\rho) C)^{1/4}\leq \FN(\rho,\sigma)\leq\FN_\ast(\rho'_m,\sigma_m^{\rho'})+(4 \rank(\rho) C)^{1/4}\,,
\end{equation}

For the $\delta_2$ bound we first apply Lemma~\ref{lemmaMTSM}  to $\rho$ and $\rho'$ and we specialize  
$\epsilon$ to be $\epsilon_{m}'=1-\Tr(\rho_{m}')$, giving
\begin{equation}
    (D_T(\rho, \rho') - \epsilon_m')^2 \leq R_{\epsilon_m'}(\rho, \rho') D_{\HS}(\rho, \sigma)\,.
\end{equation}
Combining this result with  \eqref{DeltaDT} yields
\begin{equation}
    |\FN(\rho,\sigma)-\FN(\rho',\sigma)|\leq \sqrt{2 \epsilon'_m + 2\sqrt{ R_{\epsilon_m'}(\rho, \rho') D_{\HS}(\rho, \rho')}}\,.
    \label{DeltaDT2}
\end{equation}
Then, from Lemma \ref{lemma4} we know that  $\rank_{\epsilon_m'}(\rho)\leq \rank_{\epsilon_m'}(\rho')$, which implies $R_{\epsilon_m'}(\rho, \rho')\leq\rank_{\epsilon_m'}(\rho')/2=m/2$, where the last equality comes from the fact that  $\epsilon_{m}'=\|\rho'-\rho_{m}'\|_1$.  Moreover, since $D_{\HS}(\rho, \rho')=C$, we have
\begin{equation}
    |\FN(\rho,\sigma)-\FN(\rho',\sigma)|\leq \sqrt{2 \epsilon_{m}' + \sqrt{2 m C}}\,.\label{delta2SM}
\end{equation}
Combining this result with the TFB of Proposition \ref{prop1SM} yields the $\delta_2$ bound in \eqref{EqnCCFBSM}.
\end{proof}

\section{Proof of Proposition~\ref{prop5}} \label{proofprop5SM}

We remark that Proposition~\ref{prop5} applies to any three quantum states, although for our purposes we are interested in its application to the states $\rho$, $\sigma$, and $\rho'$ discussed in the main text. Hence, for consistency, we state this proposition for these states, but we note that these states can be arbitrary.

\begin{propositionSM}
Let $\DN\left(\FN(\rho,\sigma)\right)$ be a distance measure that is monotonically decreasing in $\FN(\rho,\sigma)$. Let $\rho$, $\sigma$, and $\rho'$ be three arbitrary quantum states. Then the following certifiable triangular inequality bounds (CTIB) hold:
\begin{equation}
    \DN^{\text{LB}}_m\leq \DN\left(\FN(\rho,\sigma)\right)\leq \DN^{\text{UB}}_m\,,
    \label{EqnCTIBSM}
\end{equation}
with
\begin{eqnarray}
\DN^{\text{LB}}_m=\DN\left(\FN_\ast(\rho'_m,\sigma_m^{\rho'} )\right) - \DN\left(\FN(\rho'_m,\rho_m^{\rho'} )\right)\,,    \\ \DN^{\text{UB}}_m=\DN\left(\FN(\rho'_m,\sigma_m^{\rho'})\right) + \DN\left(\FN(\rho'_m,\rho_m^{\rho'})\right) \,,
    \label{EqnBoundsDistance2CM}
\end{eqnarray}
where $\rho'_m$, $\sigma_m^{\rho'}$, and $\rho_m^{\rho'}$ are projections of $\rho'$, $\sigma$, and $\rho$, respectively, onto the subspace of the $m$-largest eigenvectors of $\rho'$.
\end{propositionSM}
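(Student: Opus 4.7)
The plan is to combine the metric triangle inequality satisfied by $\DN$ on the state space with the TFB of Proposition~\ref{prop1} applied to two different pairs of states. I assume $\DN$ is a genuine metric --- this covers each of the three measures mentioned after the proposition (Bures angle, Bures distance, and sine distance). Inserting $\rho'$ as an intermediate state yields the two-sided inequality
\begin{equation}
\DN(\FN(\rho',\sigma)) - \DN(\FN(\rho,\rho')) \leq \DN(\FN(\rho,\sigma)) \leq \DN(\FN(\rho,\rho')) + \DN(\FN(\rho',\sigma)),
\end{equation}
so the task reduces to bounding $\DN(\FN(\rho',\sigma))$ and $\DN(\FN(\rho,\rho'))$ in terms of objects computable from the $m$-largest eigenvectors of $\rho'$.

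Next, I would invoke Proposition~\ref{prop1} twice, both times with truncation relative to the eigenbasis of $\rho'$. This gives
\begin{align}
\FN(\rho'_m,\sigma_m^{\rho'}) &\leq \FN(\rho',\sigma) \leq \FN_\ast(\rho'_m,\sigma_m^{\rho'}), \\
\FN(\rho'_m,\rho_m^{\rho'}) &\leq \FN(\rho',\rho) \leq \FN_\ast(\rho'_m,\rho_m^{\rho'}).
\end{align}
Because $\DN$ is monotonically decreasing in its fidelity argument, applying $\DN$ flips each of these inequalities, yielding upper and lower bounds on $\DN(\FN(\rho',\sigma))$ and on $\DN(\FN(\rho,\rho'))$ purely in terms of truncated quantities.

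Finally, I would substitute into the triangle inequality. For $\DN^{\text{UB}}_m$, I replace each of the two additive terms on the right by its largest available counterpart, namely $\DN(\FN(\rho'_m,\rho_m^{\rho'}))$ and $\DN(\FN(\rho'_m,\sigma_m^{\rho'}))$, giving exactly the stated expression. For $\DN^{\text{LB}}_m$, I replace $\DN(\FN(\rho',\sigma))$ by its lower counterpart $\DN(\FN_\ast(\rho'_m,\sigma_m^{\rho'}))$ and replace the subtracted term $\DN(\FN(\rho,\rho'))$ by its upper counterpart $\DN(\FN(\rho'_m,\rho_m^{\rho'}))$, reproducing the claimed lower bound.

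The argument is really a short chain of substitutions, so no step is a serious obstacle. The two things to keep in mind are (i) the reversal of inequality direction every time $\DN$ is pushed through a fidelity inequality, since $\DN$ is monotonically decreasing; and (ii) the implicit requirement that $\DN$ satisfy the triangle inequality as a metric on states, which is what makes the $\rho'$-intermediate step legitimate. Non-negativity of distance measures handles the edge case in which $\DN^{\text{LB}}_m$ evaluates to a negative number, where the resulting bound is trivially dominated by $\DN(\FN(\rho,\sigma))\geq 0$.
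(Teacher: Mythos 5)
Your proposal is correct and follows essentially the same route as the paper's proof: insert $\rho'$ as an intermediate state via the triangle inequality for $\DN$, then apply the TFB of Proposition~\ref{prop1} (with truncation in the eigenbasis of $\rho'$) together with the monotonicity of $\DN$ to each term. The paper states these steps more tersely, but the substance is identical.
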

\begin{proof}
Since $\DN\left(\FN(\rho,\sigma)\right)$ is a distance measure, it satisfies the triangular inequality. Applying this  to the states $\rho$, $\sigma$, and $\rho'$ gives
\begin{align}
     \DN\left(\FN(\rho,\sigma)\right)\geq   \DN\left(\FN(\rho',\sigma)\right) - \DN\left(\FN(\rho',\rho)\right)\,,\nonumber\\
    \DN\left(\FN(\rho,\sigma)\right)\leq  \DN\left(\FN(\rho',\sigma)\right)+\DN\left(\FN(\rho',\rho)\right) \,.
    \label{EqnTriangular}
\end{align}
Combining Proposition \ref{prop1} with \eqref{EqnTriangular} and using the monotonicity of $D$ yields \eqref{EqnCTIBSM}.
\end{proof}

\section{Proof of Proposition~\ref{prop6}} \label{proofprop6SM}

\begin{propositionSM}
The problem \textsc{Low-rank Fidelity Estimation} to within precision $\pm \gamma = 1/\poly(n)$ is \DQC-hard.
\end{propositionSM}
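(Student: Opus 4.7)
The plan is to reduce the canonical \DQC-hard problem of normalized trace estimation (NTE) to \textsc{Low-rank Fidelity Estimation}. In NTE, one is given a $\poly(n)$-sized circuit description of an $n$-qubit unitary $U$ and must estimate $\alpha(U) \coloneqq \Tr(U)/2^n$ to additive precision $\pm 1/\poly(n)$; this problem is well known to be \DQC-hard by the Knill–Laflamme characterization of one-clean-qubit computation.

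The reduction maps an NTE instance $U$ to a fidelity instance on $2n$ qubits. Let $|\Phi\rangle_{AB} = 2^{-n/2}\sum_i |i\rangle_A|i\rangle_B$ be the maximally entangled state on two $n$-qubit registers; $|\Phi\rangle$ is prepared from $|0\rangle^{\otimes 2n}$ by $n$ Hadamards on $A$ followed by $n$ CNOTs from $A$ to $B$. Define
\begin{equation}
\rho = |\Phi\rangle\langle\Phi|\,, \qquad \sigma = (U \otimes I)|\Phi\rangle\langle\Phi|(U\ad \otimes I)\,,
\end{equation}
both of which are pure (hence have rank $1$ and so trivially satisfy the low-rank promise) and both of which have $\poly(n)$-sized preparation circuits (for $\sigma$, one appends the circuit for $U$ on register $A$ to the preparation of $|\Phi\rangle$). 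Using the standard identity $\langle\Phi|(V \otimes I)|\Phi\rangle = \Tr(V)/2^n$ for any $n$-qubit operator $V$, one obtains $\FN(\rho,\sigma) = |\langle\Phi|(U\otimes I)|\Phi\rangle| = |\alpha(U)|$.

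Since $\alpha(U)$ is in general complex, its real and imaginary parts are recovered by running the same reduction on the two $(n+1)$-qubit block-diagonal unitaries $\widetilde U = |0\rangle\langle 0| \otimes U + |1\rangle\langle 1|\otimes I$ and $\widetilde U' = |0\rangle\langle 0|\otimes U + i|1\rangle\langle 1|\otimes I$, each with a $\poly(n)$-sized circuit and normalized traces $(\alpha+1)/2$ and $(\alpha+i)/2$, respectively. These produce fidelities $|(\alpha+1)/2|$ and $|(\alpha+i)/2|$, and elementary algebra yields
\begin{equation}
4|(\alpha+1)/2|^2 - |\alpha|^2 - 1 = 2\operatorname{Re}\alpha\,, \qquad 4|(\alpha+i)/2|^2 - |\alpha|^2 - 1 = 2\operatorname{Im}\alpha\,,
\end{equation}
so three calls to \textsc{Low-rank Fidelity Estimation} suffice to recover $\alpha(U)$.

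The main obstacle is precision propagation. Since $|\alpha|, |(\alpha+1)/2|, |(\alpha+i)/2| \in [0,1]$, the map $x \mapsto x^2$ is $2$-Lipschitz on that range, so additive $\pm\gamma$ fidelity estimates produce additive $\pm O(\gamma)$ estimates of the three squared moduli, and the linear combinations above then preserve $\pm O(\gamma)$ precision in $\operatorname{Re}\alpha$ and $\operatorname{Im}\alpha$. Choosing $\gamma = 1/\poly(n)$ therefore solves NTE in polynomial time using oracle access to \textsc{Low-rank Fidelity Estimation}, establishing the claimed \DQC-hardness. The remaining bookkeeping—verifying that the three preparation circuits are $\poly(n)$-sized and that $\rho$ and $\sigma$ share the same Hilbert space in each instance—is routine.
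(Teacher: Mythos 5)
Your proposal is correct, and its core construction is the same as the paper's: encode unitaries into pure Choi states via the maximally entangled state, so that the fidelity of the two rank-one states equals a normalized trace overlap magnitude, $\FN(\rho,\sigma)=|\Tr(\tilde V^\dagger \tilde U)|/2^n$. The difference lies in the source problem. The paper reduces from approximating the Hilbert--Schmidt inner-product magnitude $\Delta_{\text{HS}}(\tilde U,\tilde V)=|\Tr(\tilde V^\dagger \tilde U)|^2/d^2$ between two circuits, whose \DQC-hardness it imports from the quantum-assisted quantum compiling literature; since that quantity is already a magnitude, a single oracle call to \textsc{Low-rank Fidelity Estimation} finishes the reduction. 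You instead reduce from complex normalized trace estimation $\alpha(U)=\Tr(U)/2^n$, which is \DQC-hard by the Knill--Laflamme characterization, but because the fidelity only reveals $|\alpha|$ you must recover the phase via the two auxiliary block-diagonal unitaries $\ketbra{0}{0}\otimes U+\ketbra{1}{1}\otimes I$ and $\ketbra{0}{0}\otimes U+i\,\ketbra{1}{1}\otimes I$ and a polarization-type identity, at the cost of three oracle calls and an explicit precision-propagation step. Your bookkeeping there is sound: the squared moduli lie in $[0,1]$ so squaring is $2$-Lipschitz, and the linear combinations have $O(1)$ coefficients, so $\pm 1/\poly(n)$ fidelity precision yields $\pm 1/\poly(n)$ precision on $\operatorname{Re}\alpha$ and $\operatorname{Im}\alpha$. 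The trade-off is that your argument is self-contained modulo the canonical trace-estimation hardness result, whereas the paper's is shorter but leans on a prior hardness result for the HS-test problem; both establish the proposition.
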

\proof{
We reduce from the problem of approximating the Hilbert-Schmidt inner-product magnitude $\Delta_{\text{HS}}$ between two quantum circuits $\tilde{U}$ and $\tilde{V}$ acting on $n$-qubits \cite{QAQC}, where we define
\begin{align}
\Delta_{\text{HS}}(\tilde{U},\tilde{V}) :=\frac{1}{d^2}|\Tr(\tilde{V}^\dagger \tilde{U})|^2,
\end{align}
where $d=2^n$. 

Consider a specific instance of approximating $\Delta_{\text{HS}}$. We are given as input classical instructions to prepare $\poly(n)$-sized quantum circuits $\tilde{U}$ and $\tilde{V}$ on $n$-qubits each, and the task is to approximate $\Delta_{\text{HS}}(\tilde{U},\tilde{V})$ to precision $1/\poly(n)$. Our reduction will identify this problem as a specific instance of \textsc{Low-rank Fidelity Estimation} (Low-Rank Fidelity Estimation) via the Choi-Jamio\l{}kowski isomorphism over the unitary channels,
\begin{eqnarray}
\mathcal{E}_{\tilde{U}}: \mathcal{D}(\mathcal{H}_d) &\rightarrow \mathcal{D}(\mathcal{H}_d), \quad
X \mapsto \tilde{U} X \tilde{U}^\dagger\\
\mathcal{E}_{\tilde{V}}
: \mathcal{D}(\mathcal{H}_d) &\rightarrow \mathcal{D}(\mathcal{H}_d), \quad
X \mapsto \tilde{V} X \tilde{V}^\dagger
\end{eqnarray}
where $\mathcal{D}(\mathcal{H}_d)$ is the space of $d \times d$ dimensional hermitian matrices.
Consider now the $2n$-qubit maximally entangled state,
\begin{align}
\ket{\Phi^+} = \frac{1}{\sqrt{d}}\sum_{\vec{j}} \ket{\vec{j}} \otimes \ket{\vec{j}}=E\ket{\vec{0}}\,,
\end{align}
where $\vec{j} = (j_1, j_2,...,j_n)$ is a binary vector taking values $j_k$ in $\{0,1\}$, and where  $E$ is an efficent unitary entangling gate (e.g., a depth-two circuit composed of Hadamard and CNOT gates), where $\ket{\vec{0}}=\ket{0}^{\otimes 2 n}$. 

A special case of Low-Rank Fidelity Estimation is when $\rho$ and $\sigma$ correspond to the Choi states of $\tilde{U}$ and $\tilde{V}$. In this case, as the input to Low-Rank Fidelity Estimation, we would be given the gate sequences $U = (\tilde{U} \otimes \id)E$ and $V = (\tilde{V} \otimes \id)E$, which respectively prepare the pure states $\rho$ and $\sigma$ as
\begin{align}
\rho &= U\dya{\vec{0}}U\ad = (\tilde{U} \otimes \id)E  \dya{\vec{0}} E^\dagger(\tilde{U}^\dagger \otimes \id),\\
\sigma &=V\dya{\vec{0}}V\ad = (\tilde{V} \otimes \id)E  \dya{\vec{0}} E^\dagger (\tilde{V}^\dagger \otimes \id)\,.
\end{align}
Then, the fidelity between $\rho$ and $\sigma$ is given by
\begin{align}
\FN(\rho,\sigma)  = |\bra{\vec{0}}E\ad (\tilde{V}\ad \tilde{U} \otimes \id)E \ket{\vec{0}}| = \displaystyle \frac{1}{d}|\Tr(\tilde{V}\ad \tilde{U})|.
\end{align}
We can run \textsc{Low-rank Fidelity Estimation} to estimate the above expression to within $\gamma = 1/\poly(n)$ precision, and thus also approximate $\Delta_{\text{HS}}(U,V)$ to within $1/\poly(n)$. Finally, it is known that approximating $\Delta_{\text{HS}}(U,V)$ to within inverse polynomial precision is \DQC-hard~\cite{QAQC}, and hence the result follows.\qed
}

\section{Proof of Proposition~\ref{prop7}} \label{proofprop7SM}

Before proving Proposition~\ref{prop7SM}, we first prove the following useful lemmas.
\begin{lemma}\label{lemma5}
Let $\epsilon_m=\|\rho-\rho_m\|_1$ and $\epsilon_m'=\|\rho'-\rho_m'\|_1$. Then the following bound holds
\begin{equation}
\epsilon_m'\leq \epsilon_m  + \sqrt{m C}\,,
\end{equation}
where $C = D_{\HS}(\rho, \rho')$ and $\rho'$ is given by Eq.~\eqref{EqnRhoPrime} of the main text.
\end{lemma}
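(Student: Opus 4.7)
The plan is to reduce the claim to a pointwise bound on the differences of sorted eigenvalues, and then invoke a classical matrix inequality to convert the Hilbert--Schmidt bound into an $\ell_1$-type bound on those differences.

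First, I would rewrite both $\epsilon_m$ and $\epsilon_m'$ in spectral form. Since $\rho_m$ and $\rho_m'$ are the projections of $\rho$ and $\rho'$ onto their respective top-$m$ eigensubspaces, we have $\epsilon_m = 1 - \sum_{j=1}^{m} r_j$ and $\epsilon_m' = 1 - \sum_{j=1}^{m} r_j'$, where $\{r_j\}$ and $\{r_j'\}$ are the eigenvalues of $\rho$ and $\rho'$ listed in decreasing order. Subtracting gives $\epsilon_m' - \epsilon_m = \sum_{j=1}^{m} (r_j - r_j')$, so the lemma reduces to showing $\sum_{j=1}^{m}(r_j - r_j') \leq \sqrt{mC}$. (Note that Lemma~\ref{lemma4} already tells us this quantity is nonnegative via majorization, but majorization alone gives no quantitative control.)

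Next, I would apply the Hoffman--Wielandt inequality to the Hermitian operators $\rho$ and $\rho'$: when both eigenvalue sequences are sorted in decreasing order, $\sum_j (r_j - r_j')^2 \leq \Tr\!\left[(\rho - \rho')^2\right] = C$. This is the key step that converts the global Hilbert--Schmidt control into pointwise control on the sorted eigenvalues, and it is precisely where the eigenvector mismatch between $\rho$ and $\rho'$ is absorbed. Then by the Cauchy--Schwarz inequality applied to the first $m$ terms,
\begin{equation}
\sum_{j=1}^{m}\left|r_j - r_j'\right| \;\leq\; \sqrt{m\,\sum_{j=1}^{m}(r_j - r_j')^2} \;\leq\; \sqrt{m\,C}\,.
\end{equation}
Dropping the absolute values gives $\sum_{j=1}^{m}(r_j - r_j') \leq \sqrt{mC}$, which is equivalent to $\epsilon_m' \leq \epsilon_m + \sqrt{mC}$.

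The main obstacle here is conceptual rather than computational: one must recognize that majorization (from Lemma~\ref{lemma4}) only yields the qualitative statement $\epsilon_m' \geq \epsilon_m$, and that a quantitative gap bound requires a separate tool relating sorted eigenvalue differences to the Frobenius norm of the matrix difference. Once Hoffman--Wielandt is invoked, the rest is a one-line Cauchy--Schwarz. A possible pitfall is the temptation to write $D_{\HS}(\rho, \rho') = \sum_j (r_j - r_j')^2$, which is only true when $\rho$ and $\rho'$ are simultaneously diagonalizable; Hoffman--Wielandt is exactly what salvages this identity as an inequality in the general case.
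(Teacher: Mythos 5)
Your proof is correct and follows essentially the same route as the paper: reduce $\epsilon_m'-\epsilon_m$ to $\sum_{j=1}^m (r_j-r_j')$, bound the sorted-eigenvalue differences in $\ell_2$ by $C$, and finish with Cauchy--Schwarz. The only difference is that you derive the bound $\sum_j (r_j-r_j')^2\leq C$ directly from the Hoffman--Wielandt inequality, whereas the paper cites it as the eigenvalue-error bound from the VQSD reference~\cite{larose2018}; both are valid, and your version is self-contained.
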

\begin{proof}
Consider the difference
\begin{equation}
    \epsilon_m'-\epsilon_m=\|\rho'-\rho_m'\|_1-\|\rho-\rho_m\|_1=(1-\Tr(\rho'_{m}))-(1-\Tr(\rho_{m}))=\sum_{i=1}^{m} (r_i-r_i')\leq \sum_{i=1}^{m} |r_i-r_i'|=\|\vec{\delta}\|_1\,,
\end{equation}
where we have defined the vector $\vec{\delta}=\vec{r}-\vec{r}'$, with $\vec{r}=\{r_1,\ldots,r_{m}\}$ and $\vec{r}'=\{r_1',\ldots,r_{m}'\}$. From the vector norm equivalence, we have that $\|\vec{\delta}\|_1\leq \sqrt{m} \|\vec{\delta}\|_2 $, which then implies
\begin{equation}
    \epsilon_m'-\epsilon_m\leq \sqrt{m} \|\vec{\delta}\|_2\,.
    \label{eqEprime}
\end{equation}
Moreover, as shown in \cite{larose2018}, the VQSD cost $C$ bounds the eigenvalue error as $\|\vec{\delta}\|_2^2=\sum_{i=1}^{m} (r_i-r_i')^2\leq C$. Hence, combining this with Eq.\ \eqref{eqEprime}, we get $\epsilon_m'\leq \epsilon_m  + \sqrt{mC}$.
\end{proof}

The following lemma is an alternative version of Proposition \ref{prop7SM} that may be of interest in itself, particularly when we do not have the promise that $\rho$ is low rank. Note that this lemma does not refer to the rank properties of $\rho$ but rather refers to the rank properties of $\rho'$. The $\epsilon$-rank of $\rho'$ is experimentally measurable and hence one can use the following lemma to guarantee a particular precision even when there is no prior knowledge about the rank properties of $\rho$.

\begin{lemma}\label{lemma6}
Let $\rho$ and $\sigma$ be quantum states, and suppose we have access to a subroutine that diagonalizes $\rho$ up to error bounded by $C = D_{\HS}(\rho, \rho')$. Let $m$ and $\zeta$ be parameters. Then, VQFE runs in time $\widetilde O(m^6/\zeta^2)$ and outputs an additive $\pm \gamma'$-approximation of $F(\rho,\sigma)$, for 
\begin{equation}
   \gamma' = \sqrt{2 \epsilon_{m}' + \sqrt{2 m C}} + \sqrt{\epsilon_{m}'}+\zeta \|T\|\,,
     \label{eqngammadefSM2}
\end{equation}
where $\epsilon_m' = \|\rho' - \rho_m'\|_1$.
\end{lemma}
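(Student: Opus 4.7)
The plan is to decompose the estimation error of the VQFE output from the true fidelity $F(\rho,\sigma)$ into three parts that correspond term-by-term to the summands of $\gamma'$: (i) the VQSD diagonalization error $\rho \mapsto \rho'$; (ii) the spectral truncation error $\rho' \mapsto \rho'_m$; and (iii) the finite-sample error in building the $m \times m$ matrix $T_{ij} = \sqrt{r_i r_j}\,\sigma_{ij}$ from swap tests on the VQSD eigenvectors and classically diagonalizing it. Parts (i) and (ii) can be handled using results already proved in the appendix, while (iii) will require a separate perturbation-plus-sampling argument that also yields the stated runtime.

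For (i), Eq.~\eqref{delta2SM}, derived inside the proof of Proposition~\ref{prop4SM}, already supplies the bound $|F(\rho,\sigma) - F(\rho',\sigma)| \leq \sqrt{2\epsilon'_m + \sqrt{2mC}}$, matching the first summand of $\gamma'$. For (ii), Proposition~\ref{prop1SM} applied to the pair $(\rho',\sigma)$ gives $F(\rho'_m,\sigma_m^{\rho'}) \leq F(\rho',\sigma) \leq F_\ast(\rho'_m,\sigma_m^{\rho'})$, and the main-text looseness bound~\eqref{previousprop3}, applied with $\rho'$ in place of $\rho$, yields $F_\ast(\rho'_m,\sigma_m^{\rho'}) - F(\rho'_m,\sigma_m^{\rho'}) \leq \|\rho' - \rho'_m\|_1^{1/2} = \sqrt{\epsilon'_m}$. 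Hence $|F(\rho',\sigma) - F(\rho'_m,\sigma_m^{\rho'})| \leq \sqrt{\epsilon'_m}$, the second summand. Combining these via the triangle inequality produces the noise-free part of $\gamma'$.

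For (iii), the VQFE output is $\Tr\sqrt{\widetilde T}$ (plus the classical generalized-fidelity correction, which is a Lipschitz function of the estimated traces and contributes a comparable error), where $\widetilde T$ differs from $T$ by shot noise in the $m(m+1)/2$ independent swap tests used to estimate the $\sigma_{ij}$. A Hoeffding/Chernoff bound relates the per-entry additive precision $\eta$ to $O(1/\eta^2)$ shots, and Weyl's inequality propagates this to a spectral-norm perturbation $\|\widetilde T - T\| \leq m\,\max_{ij}|T_{ij} - \widetilde T_{ij}|$ of the eigenvalues of $\widetilde T$. Pushing this through the trace-square-root, an appropriate choice $\eta = \poly(\zeta/m)$ gives $|\Tr\sqrt{T} - \Tr\sqrt{\widetilde T}| \leq \zeta\|T\|$, and $m^2 \cdot \widetilde O(m^4/\zeta^2) = \widetilde O(m^6/\zeta^2)$ shots suffice, yielding the runtime.

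The main obstacle I anticipate is step (iii): the map $T \mapsto \Tr\sqrt{T}$ is only H\"older-$\tfrac12$ continuous at singular $T$, so the naive bound $|\sqrt{a}-\sqrt{b}| \leq \sqrt{|a-b|}$ produces only a $\sqrt{\zeta}$-type dependence, not the linear-in-$\|T\|$ scaling advertised by the $\zeta\|T\|$ term. To recover the correct form one can either invoke a Powers--St\o rmer-type inequality $\|\sqrt{T} - \sqrt{\widetilde T}\|_2^2 \leq \|T - \widetilde T\|_1$ together with Cauchy--Schwarz applied to the trace, or exploit that near-zero eigenvalues contribute negligibly to $\Tr\sqrt{T}$ and so do not spoil the relative bound. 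Once this continuity step is pinned down, the rest of the argument is essentially bookkeeping on top of Propositions~\ref{prop1SM} and~\ref{prop4SM}.
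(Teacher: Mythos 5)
Your decomposition is precisely the one the paper uses: it sets $\Delta_1=|F(\rho,\sigma)-F(\rho',\sigma)|\leq\sqrt{2\epsilon_m'+\sqrt{2mC}}$ via Eq.~\eqref{delta2SM}, $\Delta_2=|F(\rho',\sigma)-F(\rho_m',\sigma)|\leq\sqrt{\epsilon_m'}$ via Proposition~\ref{prop1SM} together with the looseness bound~\eqref{previousprop3} applied to $\rho'$, and then a per-entry Chernoff bound, a union bound over the $O(m^2)$ entries, and an eigenvalue-perturbation step to get $\Delta_3\leq\zeta\|T\|$ with $\widetilde O(m^6/\zeta^2)$ total samples. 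The obstacle you flag in step (iii) is genuine but is not actually resolved in the paper either: its proof passes from $\max_j|\sqrt{\lambda_j}-\sqrt{\hat\lambda_j}|$ to $\|T-\hat T\|$ with no justification, a step that fails when $T$ is near-singular (the square root being only H\"older-$\tfrac12$ there), so your Powers--St\o rmer route would be the more careful argument, at the price of a $\sqrt{\zeta}$-type rather than $\zeta\|T\|$ dependence.
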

\begin{proof}
Let us first define the following quantities which will be useful to bound the error in each step of the VQFE algorithm:
\begin{align}
   \Delta_0&= | F(\rho,\sigma) - \hat F(\rho_m',\sigma)|\,,\\
    \Delta_1&= | F(\rho,\sigma) - F(\rho',\sigma)|\,,  &B_1&= \sqrt{2 \epsilon_{m}' + \sqrt{2 m C}}\,,\\
    \Delta_2&=|\FN(\rho',\sigma) - \FN(\rho_m',\sigma)|\,,  &B_2&= \sqrt{\epsilon'_m}\,,\\
   \Delta_3&=| F(\rho'_m,\sigma) - \hat F(\rho'_m,\sigma)|\,,  &B_3&= m \cdot \|T - \hat T\|\,.
\end{align}
where the notation $\hat{a}$ (e.g., in $\hat{T}$) indicates a fine-sampling estimate of the random variable $a$. Note that the following bounds hold:
\begin{equation}
    \Delta_1 \leq B_1,\qquad  \Delta_2 \leq B_2,\qquad  \Delta_3 \leq B_3\,,
\end{equation}
where the first inequality comes from Eq.\ \eqref{delta2SM} and the second inequality can be derived from Eq.\ \eqref{previousprop3} of the main text as follows
\begin{equation}
    \FN(\rho',\sigma) - \FN(\rho_m',\sigma)\leq \FN_\ast(\rho_m',\sigma_m^{\rho'}) - \FN(\rho_m',\sigma_m^{\rho'})\leq \|\rho' - \rho_m'\|_1^{1/2}=\sqrt{\epsilon'_m}\,.
\end{equation}
Finally, we prove the third inequality in Eq.\ \eqref{singular} below.

By means of multiple applications of the triangle inequality, we obtain the following result:
\begin{align}
\Delta_0&\leq  \Delta_1+| F(\rho',\sigma) - \hat F(\rho_m',\sigma)|\\
        &\leq \Delta_1+ \Delta_2+\Delta_3 \label{DBounds} \\
        &\leq B_1+ B_2 + B_3\,. \label{BBounds}
\end{align}
Note that $\Delta_0$ is ultimately the quantity we are interested in bounding since $F(\rho, \sigma)$ is the desired fidelity while $\hat F(\rho_m',\sigma)$ is the quantity that VQFE actually outputs. While \eqref{BBounds} provides an upper bound on $\Delta_0$, we will need to re-formulate this as a probabilistic bound since $B_3$ depends on the estimate $\hat T$ of a random variable.

To do this re-formulation, let us consider the sources of statistical noise on $\hat T$. In the first step of VQFE, we call the VQSD subroutine to approximately diagonalize  $\rho$ up to some error bounded by $C=D_{\HS}(\rho, \rho')$. Here $\rho'=\sum_{i} r_i' \ketbra{r_i'}{r_i'}$, where $\{r_i'\}$ and $\{\ket{r_i'}\}$ are approximations of the eigenvalues and eigenvectors of $\rho$, respectively. While $\rho'$ can only be estimated up to some finite sampling precision, such error can be tuned appropriately. We refer to \cite{larose2018} for further details.

The second step of VQFE employs the Swap Test to measure the matrix elements $\sigma_{ij}=\matl{r_i}{\sigma}{r_j}$. If $C\neq 0$ we measure instead  $\sigma_{ij}'=\matl{r_i'}{\sigma}{r_j'}$, which can be used to  construct the approximate matrix $T'$ with matrix elements  ${T}_{ij}'=(r_i'r_j')^{1/2}\sigma_{ij}'$. For the sake of brevity, we will simply write $T$ instead of $T'$.
For each entry of the symmetric matrix ${T}$ that needs to be determined, we call the Swap Test $O(m^4/\zeta^2 \cdot \log(2m^2/\delta))$ times. We denote $\hat{T}$ as the finite-sampling estimate of $T$.
Then, according to the two-sided Chernoff bound, we have that any entry $(i,j)$ in $\hat T$ has relative error at most $\zeta/m^2$ with probability greater than $1 - \delta/m^2$,
\begin{align}
\Pr\Big[ |{\hat T}_{ij} - T_{ij}| \leq \frac{\zeta }{m^2} \cdot T_{ij} \Big] \geq 1-\delta/m^2.
\end{align}
By the union bound, the probability that this is the case simultaneously for all entries is then given by
\begin{align}\label{union-bound}
\Pr\Big[ \, \forall (i,j) \in [m] \times [m]: \,\,\, |{\hat T}_{ij} - T_{ij}| \leq \frac{\zeta }{m^2} \cdot T_{ij} \Big] \geq 1-\delta.
\end{align}
Then, we conclude that
\begin{align}\label{matrix-error}
&\Pr\Big[ \|  T - {\hat T}  \| \leq \frac{\zeta \|T\|}{m} \Big] \notag\\ 
&=\Pr\left[ \sqrt{\sum_{i,j=1}^m |T_{ij} - \hat T_{ij}|^2} \leq \frac{\zeta \|T\|}{m} \right] \\
&=
\Pr\left[ \sqrt{\sum_{i,j=1}^m |T_{ij} - \hat T_{ij}|^2} \leq \frac{\zeta \|T\|}{m} \hspace{2mm} \Big{|} \hspace{1.8mm} \forall (i,j): \, |{\hat T}_{ij} - T_{ij}| \leq \frac{\zeta  \cdot T_{ij}}{m^2} \right] \cdot \Pr\left[\forall (i,j): \, |{\hat T}_{ij} - T_{ij}| \leq \frac{\zeta \cdot T_{ij}}{m^2} \right]\\
& \geq
\underbrace{\Pr\left[  \frac{\zeta }{m^2}\sqrt{\sum_{i,j=1}^m T_{ij}^2} \leq \frac{\zeta \|T\|}{m} \hspace{2mm} \Big{|} \hspace{1.8mm} \forall (i,j): \, |{\hat T}_{ij} - T_{ij}| \leq \frac{\zeta  \cdot T_{ij}}{m^2} \right]}_{= \, 1} \cdot \Pr\left[\forall (i,j): \, |{\hat T}_{ij} - T_{ij}| \leq \frac{\zeta \cdot T_{ij}}{m^2} \right]\\
&= \Pr\left[ \,\forall (i,j): \, |{\hat T}_{ij} - T_{ij}| \leq \frac{\zeta }{m^2} \cdot T_{ij} \right] \quad \geq  \quad 1 - \delta. \quad \quad (\text{by Eq. } \eqref{union-bound}) \label{2-norm-bound}
\end{align}

The third step of VQFE requires the diagonalization of the $m\times m$ symmetric matrix $\hat T$, which can be accomplished in time $O(m^3)$. Let us now prove the bound $\Delta_3\leq B_3$, i.e., the bound on the finite sampling error in our estimate $\hat F(\rho'_m,\sigma)$ of the fidelity $F(\rho'_m,\sigma)$. By denoting $\lambda_j$ and $\hat{\lambda}_j$ the eigenvalues of $T$ and $\hat{T}$, respectively,  we get
\begin{align} \label{singular}
\Delta_3 & =| F(\rho'_m,\sigma) - \hat F(\rho'_m,\sigma)| \notag \\
& = | \Tr\sqrt{T} - \Tr\sqrt{\hat T}| \notag \\
& \leq \sum_{j=1}^m | \sqrt{\lambda_j} - \sqrt{\hat{\lambda_j}}| \notag\\
& \leq m \cdot \max_{1 \leq j \leq m } | \sqrt{\lambda_j} - \sqrt{\hat{\lambda_j}}| \notag \\
& \leq m \cdot \|T - \hat T\|= B_3\,.
\end{align}
Then, by means of Eq. (\ref{2-norm-bound}) and Eq. (\ref{singular}), we get
\begin{align}
\label{la-bound}
\Pr \left[ \Delta_3  \leq \zeta \|T\| \right] =\Pr \left[ | F(\rho'_m,\sigma) - \hat F(\rho'_m,\sigma)|  \leq \zeta \|T\| \right] &= \Pr \left[ | \Tr\sqrt{T} - \Tr\sqrt{\hat T}| \leq \zeta \|T\| \right]\\
& \geq \Pr \left[ m \cdot \|T - \hat T\|  \leq \zeta \|T\| \right] \\
& \geq \quad 1 - \delta\,. \label{EqDeltaSM}
\end{align}

Let us define 
\begin{equation}
    \gamma' = B_1+B_2+\zeta \|T\| = \sqrt{2 \epsilon_{m}' + \sqrt{2 m C}} + \sqrt{\epsilon_{m}'}+\zeta \|T\| \,,\label{eqngammap}
\end{equation}
and combine our previous results to bound the VQFE run time:
\begin{align}
\Pr \left[ | F(\rho,\sigma) - \hat F(\rho_m',\sigma)| \leq \gamma' \right]
&= \Pr \left[ \Delta_0 \leq B_1+B_2+\zeta \|T\| \right]\\
&\geq \Pr \left[ \Delta_1+\Delta_2+\Delta_3 \leq B_1+B_2+\zeta \|T\| \right]  \quad \quad (\text{by Eq. } \eqref{DBounds}) \\
&\geq \Pr \left[ B_1+B_2+\Delta_3 \leq B_1+B_2+\zeta \|T\|\right]  \quad \quad (\text{by Eq. } \eqref{BBounds}) \\
&= \Pr \left[ \Delta_3\leq \zeta  \|T\|\right]\\
& \geq \quad 1 - \delta. \quad \quad \quad \quad \,\,\, \quad \quad \quad \quad \quad \quad \qquad \quad (\text{by Eq. } \eqref{EqDeltaSM}) \label{Prgammap}
\end{align}

As previously mentioned, VQFE performs the Swap Test for each of the $O(m^2)$ matrix elements.
Therefore, VQFE runs in time $O(m^6/\zeta^2 \cdot \log(2m^2/\delta))$, or just $\widetilde O(m^6/\zeta^2)$ when neglecting slower growing logarithmic contributions. Furthermore, from \eqref{Prgammap}, we have that VQFE outputs an additive $\pm \gamma'$-approximation of the fidelity $F(\rho,\sigma)$ with precision $\gamma'$ defined by \eqref{eqngammap} and with probability at least $1- \delta$.
\end{proof}

\begin{propositionSM}
Let $\rho$ and $\sigma$ be quantum states, and suppose we have access to a subroutine that diagonalizes $\rho$ up to error bounded by $C = D_{\HS}(\rho, \rho')$. Let $m$ and $\zeta$ be parameters. Then, VQFE runs in time $\widetilde O(m^6/\zeta^2)$ and outputs an additive $\pm \gamma$-approximation of $F(\rho,\sigma)$, for \begin{equation}
   \gamma = (\epsilon + \sqrt{mC})^{1/2}+ (2 \epsilon + (2+\sqrt{2})\sqrt{mC})^{1/2}+ \zeta \|T\|\,,
     \label{eqngammadefSM}
\end{equation}
where $\epsilon$ is determined by $m = \rank_{\epsilon}(\rho)$.
\label{prop7SM}
\end{propositionSM}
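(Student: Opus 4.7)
The plan is to derive Proposition 7 directly from Lemma 6 by translating its $\rho'$-dependent precision into one expressed entirely in terms of the true state $\rho$. Lemma 6 already does the heavy lifting: it establishes the runtime $\widetilde{O}(m^6/\zeta^2)$ and the precision $\gamma' = \sqrt{2\epsilon_m' + \sqrt{2mC}} + \sqrt{\epsilon_m'} + \zeta\|T\|$, where $\epsilon_m' = \|\rho' - \rho_m'\|_1$ refers to the (experimentally accessible) VQSD output $\rho'$. Since VQFE itself is unchanged, the runtime bound transfers verbatim, and the only remaining task is to rewrite the precision in terms of the hypothesis $m = \rank_\epsilon(\rho)$, which controls $\epsilon_m := \|\rho - \rho_m\|_1 \leq \epsilon$ rather than $\epsilon_m'$.

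The bridge between $\epsilon_m'$ and $\epsilon_m$ is Lemma 5, which gives $\epsilon_m' \leq \epsilon_m + \sqrt{mC}$. Under the stated promise this upgrades to $\epsilon_m' \leq \epsilon + \sqrt{mC}$. Substituting into the two square-root terms of $\gamma'$ yields
\begin{equation*}
\sqrt{\epsilon_m'} \leq \bigl(\epsilon + \sqrt{mC}\bigr)^{1/2},
\end{equation*}
\begin{equation*}
\sqrt{2\epsilon_m' + \sqrt{2mC}} \leq \sqrt{2\epsilon + 2\sqrt{mC} + \sqrt{2mC}} = \bigl(2\epsilon + (2+\sqrt{2})\sqrt{mC}\bigr)^{1/2},
\end{equation*}
which, together with the unchanged $\zeta\|T\|$ term, reproduces exactly the stated $\gamma$.

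No step here is conceptually hard; the substantive estimates — Proposition~\ref{prop4}'s $\delta_2$ bound $|\FN(\rho,\sigma) - \FN(\rho',\sigma)| \leq \sqrt{2\epsilon_m' + \sqrt{2mC}}$, the truncation looseness $|\FN(\rho',\sigma) - \FN(\rho_m',\sigma)| \leq \sqrt{\epsilon_m'}$ inherited from Propositions~\ref{prop1}--\ref{prop2}, and the Chernoff/union-bound argument controlling the finite-sampling error $|\hat{F}(\rho_m',\sigma) - F(\rho_m',\sigma)|$ in terms of $\zeta\|T\|$ — all live inside Lemma~6. The only new ingredient is Lemma~5, whose short proof combines Schur--Horn majorization ($\rho \succ \rho'$) with the vector norm inequality $\|\vec{\delta}\|_1 \leq \sqrt{m}\,\|\vec{\delta}\|_2$ and the VQSD eigenvalue-error bound $\sum_i (r_i - r_i')^2 \leq C$.

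The main obstacle, such as it is, is not analytical but logistical: making sure the $\rank_\epsilon(\rho)$ promise is the correct quantity to condition on, since VQFE has empirical access to $\rho'$ rather than $\rho$. The proposition sidesteps this by taking $\epsilon$ purely as a hypothesis about $\rho$ that governs the precision bound, with Lemma~5 propagating it to $\rho'$ in a quantitatively controlled way. This is precisely why two versions of the result are useful in tandem: Lemma~6 gives guarantees in terms of the measurable $\epsilon_m'$, while Proposition~7 recasts them under the physically natural — but potentially unknown — low-rank promise on $\rho$, which is the form needed to conclude that VQFE solves \textsc{Low-rank Fidelity Estimation} in $\poly(n)$ time whenever $mC$ and $\epsilon$ are inverse-polynomially small.
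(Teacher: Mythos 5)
Your proposal is correct and follows essentially the same route as the paper: both reduce Proposition~\ref{prop7SM} to Lemma~\ref{lemma6} by showing $\gamma \geq \gamma'$, using $\|\rho-\rho_m\|_1 \leq \epsilon$ (from $m=\rank_\epsilon(\rho)$) together with Lemma~\ref{lemma5}'s bound $\epsilon_m' \leq \epsilon_m + \sqrt{mC}$ to convert the two square-root terms, while the runtime and probabilistic guarantee carry over unchanged. No gaps.
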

\begin{proof}
This proposition is very similar to Lemma \ref{lemma6} except that $\gamma'$ is replaced by $\gamma$. To make this replacement we show that $\gamma\geq\gamma'$ as follows: 
\begin{align}
    \gamma&=\sqrt{2 \epsilon + (2 + \sqrt{2})\sqrt{ m C}} + \sqrt{\epsilon+\sqrt{mC}}+\zeta  \|T\|\nonumber\\
   & \geq \sqrt{2 \|\rho - \rho_m\|_1 + (2 + \sqrt{2})\sqrt{ m C}} + \sqrt{\|\rho - \rho_m\|_1+\sqrt{mC}}+\zeta  \|T\|\nonumber\\
    &\geq \sqrt{2 \epsilon_{m}' + \sqrt{2 m C}} + \sqrt{\epsilon_{m}'}+\zeta  \|T\| \nonumber\\
&=\gamma'\,.
\end{align}
The first inequality follows from the fact that $m=\rank_\epsilon(\rho)$ and the second inequality follows from Lemma \ref{lemma5}.

Recall from the proof of Lemma \ref{lemma6} (Eq.\ \eqref{Prgammap}) that the output of VQFE satisfies the following inequality 
\begin{align}
\Pr \left[ | F(\rho,\sigma) - \hat F(\rho_m',\sigma)| \leq \gamma' \right]
  \geq 1 - \delta.
\end{align}
Then, because $\gamma\geq\gamma'$ we also have that
\begin{align}
\Pr \left[ | F(\rho,\sigma) - \hat F(\rho_m',\sigma)| \leq \gamma \right]
\geq \Pr \left[ | F(\rho,\sigma) - \hat F(\rho_m',\sigma)| \leq \gamma' \right] \geq 1-\delta\,.
\label{eqnprop7finalresult}
\end{align}
As noted in the proof of Lemma \ref{lemma6}, VQFE achieves this output with a run time $\widetilde O(m^6/\zeta^2)$. Hence, with this run time, VQFE outputs a $\pm \gamma$-approximation of the fidelity $F(\rho,\sigma)$ with precision $\gamma$ defined by \eqref{eqngammadefSM} and with probability at least $1- \delta$.

\end{proof}

\end{document}